\numberwithin{equation}{section}
\pgfplotsset{compat=1.18}
\theoremstyle{plain}
\newtheorem{theorem}{Theorem}
\newtheorem{lemma}{Lemma}
\newcommand{\comment}[1]{\textcolor{red}{[#1]}}
\def\bR{\mathbb{R}}
\def\bZ{\mathbb{Z}}
\def\cA{\mathcal{A}}
\def\cH{\mathcal{H}}
\def\Arf{\mathop{\mathrm{Arf}}\nolimits}
\def\supp{\mathop{\mathrm{supp}}}
\def\coc{\beta}
\def\com{\gamma}
\def\cF{\mathcal{F}}
\def\NS{\text{NS}}
\def\R{\text{R}}
\def\hotimes{\mathrel{\hat\otimes}}
\def\lambda{u}
\begin{document}

\begin{titlepage}

\begin{flushright}

\end{flushright}

\vskip 3cm

\begin{center}

{\Large \bfseries On anomalies and fermionic unitary operators}

\vskip 1cm
Masaki Okada$^1$,
Shutaro Shimamura$^2$,
Yuji Tachikawa$^1$
and
Yi Zhang$^1$
\vskip 1cm

\begin{tabular}{ll}
$^1$ & Kavli Institute for the Physics and Mathematics of the Universe (WPI), \\
& University of Tokyo,  Kashiwa, Chiba 277-8583, Japan, \\
$^2$ & Department of Physics,
 University of Tokyo,  Bunkyo, Tokyo 113-0033, Japan \\
\end{tabular}

\vskip 2cm

\end{center}

\noindent 
We point out that
fermionic unitary operators which anticommute among themselves appear in various situations
in quantum field theories with anomalies in the Hamiltonian formalism.
To illustrate, we give multiple derivations of the fact that position-dependent $U(1)$ transformations
of two-dimensional theories with $U(1)$ symmetry of odd level are fermionic
when the winding number is odd.
We then relate this mechanism to the anomalies of the discrete $\bZ_N\subset U(1)$ symmetry,
whose description also crucially uses unitary operators which are fermionic.
We also show that position-dependent $SU(2)$ transformations of
four-dimensional theories with $SU(2)$ symmetry with Witten anomaly
are fermionic and anticommute among themselves
when the winding number is odd.

\end{titlepage}

\setcounter{tocdepth}{2}
\tableofcontents

\section{Introduction and summary}
\label{sec:intro}

In recent years, we have seen significant improvements in our understanding of symmetries 
of quantum field theories (QFTs) and of their anomalies.
Among others, we have learned that allowed forms of anomalies of the same symmetry group in the same spacetime dimensions
can differ between bosonic QFTs and fermionic QFTs.
For example, consider $U(1)$ symmetry in two dimensions,\footnote{%
In this paper we use the hep-th convention of referring to the spacetime dimensions, rather than the spatial dimensions.
} whose anomalies are well-known to be characterized by an integer $k$ called its level.
In this case, any $k$ is allowed in fermionic QFTs, but $k$ has to be even in bosonic QFTs.
Similarly, in the case of $\bZ_2$ symmetry in two dimensions,
its anomaly is classified by $\bZ_2$ in bosonic QFTs but by $\bZ_8$ in fermionic QFTs,
such that $1\in \bZ_2$ maps to $4\in\bZ_8$.

The source of this difference can be understood in many different ways,
and in this paper we would like to understand it from the Hamiltonian point of view,
by studying how the symmetry operations are implemented in terms of unitary operators acting on the Hilbert space
of the theory.
Our central observation is the following. Given a theory in spacetime dimension $d$
with the Hilbert space $\cH$ for a spatial slice $M_{d-1}$,
we can consider unitary operators $U_1$, $U_2$ representing 
 symmetry operations localized respectively within disjoint regions $R_1, R_2 \subset M_{d-1}$,
 $R_1\cap R_2 =\varnothing$.
 In a \emph{bosonic} theory, the locality of the system dictates that they commute: \begin{equation}
 U_1U_2 = U_2 U_1,
\end{equation} whereas in a \emph{fermionic} theory, they can either commute or anticommute: \begin{equation}
U_1 U_2= \pm U_2 U_1.
\end{equation}
This additional possibility of having a minus sign here results in the difference of the structure of
anomalies in bosonic and fermionic QFTs, as we will amply see in this paper.

As the first example, which we present in Sec.~\ref{sec:2dU(1)},
we will discuss the case of $U(1)$ symmetry with level $k$ in two dimensions.
When we put the theory on a spatial circle\footnote{%
We prefer to use $S^1$ over $\bR$ as the spatial slice because of the following reasons.
On $\bR$, we can consider symmetry operations
which have different limits at two asymptotic infinites, $x\to\pm \infty$,
or do not have limits at all,
and they cause various additional complications.
} $S^1$, we can consider a position-dependent symmetry operation $U(f)$
acting on the Hilbert space $\cH$ schematically given by\footnote{%
In a  $G$-symmetric theory with dynamical fields $\phi(x)$, we can consider two types of position-dependent symmetry transformations given by functions $g:M\to G$:
\begin{itemize}
\item[A.] One is to do $\phi(x)\mapsto \tilde\phi(x):=g(x)\phi(x)$ and changing the background $G$-gauge field $A$ into $\tilde A(x):=g(x)Ag(x)^{-1}+ g(x) d g(x)^{-1}$.
\item[B.] Another is to do $\phi(x)\mapsto g(x)\phi(x)$, while keeping the background $G$-gauge field $A(x)$ unchanged,
one choice of which is to simply set $A=0$ throughout. 
\end{itemize}
The former simply gives a different description of the same system
whereas the latter can make an actual difference and does not necessarily commute with the Hamiltonian.
Denoting by $H[A]$ the Hamiltonian with the background gauge field $A$,
the former point of view is simply that $H[\tilde A]= U(g) H[A] U(g)^{-1}$ is conjugate to the original $H[A]$,
whereas the latter point of view emphasizes the fact that $H[\tilde A]$ is different from $H[A]$.
But both points of view deal with the same set of unitary operators $U(g)$ 
satisfying the same mathematical properties.

As the operation from the viewpoint B  does not commute with the Hamiltonian, 
one might say that it would not be appropriate to call it a `symmetry'.
But note that it is a common practice to call the entirety of $J_{\pm n}$ for a $U(1)$ current as an infinite-dimensional `symmetry',
and the operator $\exp(2\pi i a (J_n+J_{-n}))=\exp(2\pi i a\int_0^{2\pi} 2\cos(nx) J(x))=:U(2a\cos(nx))$ is 
an example of the type of unitary operators we will discuss in this paper.
\label{foot:fundamental}
} \begin{equation}
U(f) =\exp\left(2\pi i \int_{S^1} f(x)J^t(x) dx\right),
\label{Uf-intro}
\end{equation} where $f$ is a function on $S^1$ and $J^t(x)$ is the charge density operator.
The symmetry being $U(1)$ means that $f(x)$ is defined only modulo $1$,
so that it should be possible to define it for $f$ being a function from the spatial $S^1$ to the group manifold $U(1)$.
This brings various subtleties and complications in the definition of this operator in general,
and we are interested in the commutation relation among these operators $U(f)$.
This question was considered in the past by Segal and collaborators \cite{Segal:1981ap,MR797420,MR900587,Freed:2006ya,Freed:2006yc},
and our intention is to revisit this issue from a slightly different perspective.

Suppose $f$ is of winding number one and $\exp(2\pi if)$ is different from $1\in U(1)$ only in the region $R_1\subset S^1$.
Then we expect $U(f)$ to act locally in this region.
Similarly, suppose $g$ is also of winding number one and 
$\exp(2\pi ig)$ is different from $1\in U(1)$ only in the region  $R_2\subset S^1$
disjoint from $R_1$.
As the charge density operators $J^t(x)$ and $J^t(x')$  commute when $x$ and $x'$ are spatially separated,
we might naively expect that $U(f)$ and $U(g)$ commute.
We will see, however, that \begin{equation}
U(f) U(g) = - U(g) U(f)
\label{ac-intro}
\end{equation} when the level $k$ is odd. Such anticommutation is only allowed in a fermionic theory,
for which $U(f)$ and $U(g)$ can be fermionic and anticommute when supported on separate regions.
Stated differently, this sign forbids odd $k$ in a bosonic theory.
We give a derivation of this crucial sign \eqref{ac-intro} in two different ways, 
one by defining the operator \eqref{Uf-intro}
by a careful procedure, and another by a more axiomatic approach of starting from the 
commutation relations between winding-number-zero operators and extending them in a consistent manner.

As the second example, which we present in Sec.~\ref{sec:2dZ_n},
we consider the case of $\bZ_n$ symmetry, again in two dimensions.
In this case it is known that anomalies can be read off by considering the application 
of each element $g\in \bZ_n$ on a half-line $x>0$ but not on the other half $x<0$,
as was known in the algebraic QFT communities from decades ago e.g.~in \cite[Sec.~4.2]{Muger:2005ra}, 
and also in the condensed matter literature from the work of Else and Nayak \cite{Else:2014vma} from about ten years ago.
Denote by $\rho_g(O)$ the action of this half-space symmetry operation on operators $O$.
Then $\rho_g\rho_h$ should be basically equal to $\rho_{gh}$. But the application of a discrete symmetry
only on a half-line $x>0$ can have nontrivial effects around a small region $R$ around $x\sim 0$, 
leading to the condition \begin{equation} \label{eq:localunitary}
\rho_g(\rho_h(O))=u_{g,h}(\rho_{gh}(O)) u_{g,h}^{-1}
\end{equation} where $u_{g,h}$ is a unitary operator supported in this small region $R$.
Again, what distinguishes fermionic QFTs from bosonic QFTs is that 
$u_{g,h}$ can be fermionic in the former case.
We will review how this fact leads to a different classification of anomalies between fermionic and bosonic QFTs.
We also study how this anomaly arises when $\bZ_n$ is a subgroup of $U(1)$,
by constructing the operators $u_{g,h}$ explicitly in terms of $U(f)$ in Sec.~\ref{sec:2dU(1)}.

As the third and final example, which we present in Sec.~\ref{sec:4dSU(2)},
we will study the case of $SU(2)$ symmetry in four dimensions.
In that case, the anomalies are $\bZ_2$-valued, as originally found by Witten \cite{Witten:1982fp}.
Let us now consider unitary operators $U(f)$ implementing on the Hilbert space $\cH$
position-dependent $SU(2)$ transformations described by maps $f:\bR^3\to SU(2)$.
Suppose now two such maps $f_{1,2}:\bR^3\to SU(2)$ are nontrivial respectively 
only in disjoint compact regions $R_{1,2}\subset \bR^3$, $R_1\cap R_2=\varnothing$.
Again, naively we might expect that they should commute, but we will find \begin{equation}
U(f_1)U(f_2)=-U(f_2)U(f_1)
\end{equation} instead, when the winding numbers\footnote{
As we assume $f$ and $g$ are trivial outside of the compact regions $R_1$ and $R_2$,
we can regard them as maps from $S^3$ to $SU(2)$,
so the winding numbers are well-defined.
} of $f_1$ and $f_2$ are both one and when 
the $SU(2)$ symmetry in question has the Witten anomaly.
This result is analogous to the anticommutation \eqref{ac-intro} in the two-dimensional case,
but our derivation in the four-dimensional case is more abstract, using a geometric analysis
using the theory of invertible phases and $\eta$-invariants,
which also leads to a third and geometric derivation of \eqref{ac-intro}.

Before proceeding, the authors note that most of the results presented in this paper
are known in some form or another in the literature.
Firstly, the general philosophy of trying to see the manifestation of anomalies in the Hamiltonian framework
can be said to follow that of \cite{Delmastro:2021xox} but the implementation here is rather different.
More concretely, for the content in Sec.~\ref{sec:2dU(1)}, we already mentioned the work of Segal and collaborators \cite{Segal:1981ap,MR797420,MR900587,Freed:2006ya,Freed:2006yc}.
There is also a related work \cite{Bohm:1993yg} focusing on the cases of even level $k$.
For the content in Sec.~\ref{sec:2dZ_n}, we already noted the work of Else and Nayak \cite{Else:2014vma};
the work by Seifnashri \cite{Seifnashri:2023dpa} is also highly relevant.
And for the content in Sec.~\ref{sec:4dSU(2)}, the general idea is very much influenced by the works 
\cite{Freed:2006ya,Freed:2006yc}.
Also, the fact that unitary operators for winding number one $SU(2)$ transformations are fermionic 
can be thought of as almost already implicitly known from the work of Witten \cite{Witten:1983tx}, as we will comment in more detail later.
The content of a more recent paper by Jia and Yi \cite{Jia:2024kbl} is also closely connected.
That said,  the unified perspective emphasizing the role played by fermionic unitary operators 
in the diverse topics covered here was  useful and illuminating at least to the authors themselves,
and it was the desire of the authors to share it to the wider community that led them to produce this paper.

\section{\texorpdfstring{$U(1)$}{U(1)} symmetry in two dimensions}
\label{sec:2dU(1)}

\subsection{The setup}
\label{sec:setup}

We start our discussions by considering $U(1)$ symmetry in two dimensions.
On the spacetime of the form $\bR_\text{time}\times S^1_\text{space}$,
we normalize our charge density operator $J^t(x,t)$ to have the equal-time commutation relation
\begin{equation}
 [J^t(x,t), J^t(y,t)] =  k \frac{i}{2 \pi} \frac{\partial}{\partial y} \delta^{(\text{P})}(x - y)  \,.
 \label{eq:equal-time-commutator}
\end{equation}
We will parametrize the spatial $S^1$ by $[0,2\pi)$,
and the superscript $\text{P}$ in the delta function is to remind ourselves 
that it is a periodic delta function. 
For brevity, we drop the dependence on $t$ in the following.
Our charge density is normalized so that the position-independent $U(1)$ transformation is given by \begin{equation}
\exp\left(2\pi i\theta \int_{S^1} J^t(x) dx\right)
\end{equation} with the identification $\theta\sim\theta+1$.
We call the parameter $k$ appearing in \eqref{eq:equal-time-commutator} the level of the symmetry. It is normalized so that it has the value $k=1$ for a complex left-moving free fermion of charge $1$, 
as can be verified by an explicit computation.
We will see that $k$ is restricted to be an integer in a fermionic theory, and an even integer in a bosonic theory.

Let $LU(1) := \{f:S^1 \to  U(1) \}$ be the loop group of $U(1)$
and $LU(1)_0$ be its zero-winding-number subgroup that is connected to the identity.
Within the zero-winding-number sector, 
we consider the position dependent symmetry transformation $U(f_0)$ 
in terms of a function $f_0 : S^1\to \bR$, with the convention 
\begin{equation}
   LU(1)_0 \ni \exp(2 \pi i f_0) \longmapsto U(f_0) := \exp 2\pi i \int_{S^1} f_0(x)  J^t(x)\,dx \,.
   \label{eq:udef}
\end{equation}
Using the Baker-Campbell-Hausdorff (BCH) formula and \eqref{eq:equal-time-commutator},
we easily see
\begin{equation}
   U(f_0)  U(g_0) = \exp\left[ 2\pi i \coc_0(f_0,g_0) \right] U(f_0 + g_0) 
   = \exp \left[ 2\pi i  \com_0(f_0,g_0) \right]  U(g_0)  U(f_0) \,,
\end{equation}
where the $2$-cocycle map $\coc_0(f_0,g_0)$ and 
the commutator map\footnote{%
Here we are very slightly abusing the terminology.
Strictly speaking, it should be the exponentiated quantities, such as $\exp 2\pi i  \coc_0 $ and $\exp 2\pi i  \com_0 $, that define the cocycles and commutators on $LU(1)_0$.} $\com_0(f_0,g_0)$ on $LU(1)_0$
are given by\footnote{%
We note that higher-dimensional and higher-form versions of this formula were discussed in 
\cite{Hofman:2024oze,Vitouladitis:2025zoy}.
}
\begin{equation} \label{eq:B0andc0}
\begin{aligned}
\coc_0(f_0,g_0)&=  \frac{k}{2}   \int_{S^1} f_0 \,g'_0\,dx, \\
\com_0(f_0,g_0) &= \frac{k}{2} \int_{S^1} \left( f_0 g'_0 - g _0 f'_0 \right) dx  \,.\\
\end{aligned}
\end{equation}

What we would like to achieve in this section is to extend them to 
nonzero-winding-number sectors.
We will do so in two distinct ways, leading to the same answers.
In particular, we find that the commutation relation \begin{equation}
U(f) U(g)=\exp\left[2 \pi i \com(f,g)\right] U(g)U(f)
\end{equation} 
for the position-dependent $U(1)$ symmetry operators $U(f)$ and $U(g)$ is uniquely given by
the commutator map
\begin{equation}
    \com(f,g) = \frac{k}{2} \left( \int_0^{2\pi} \left( f(x) g'(x) - g (x) f'(x) \right) dx + f(0) w_g - w_f g(0)\right) 
 \,.
 \label{eq:full-result}
\end{equation}
Here, $f,g$ are functions $S^1\to U(1)$ represented as  maps $[0,2\pi]\to \bR$
with $w_f:=f(2\pi)-f(0)$ and $w_g:=g(2\pi)-g(0)$ being integers representing their winding numbers.

More concretely,
in Sec.~\ref{sec:1st-derivation},
we derive this result by giving a careful regularized definition of the operator $U(f)$,
by generalizing the operator $U(f_0)$ in the zero-winding-number sector 
which was given in \eqref{eq:udef}.
In contrast, in Sec.~\ref{sec:2nd-derivation},
we characterize the commutator map \eqref{eq:full-result} 
to be the unique consistent solution 
generalizing the zero-winding-number result \eqref{eq:B0andc0}
which satisfies the locality, i.e.~the condition \begin{equation}
U(f)U(g) =\pm U(g)U(f) \label{eq:gradedcommutation}
\end{equation} when $f$ and $g$ are supported on disjoint regions of the spatial $S^1$.
Two derivations given in Sec.~\ref{sec:1st-derivation} and Sec.~\ref{sec:2nd-derivation}
can be read independently.

Here are two immediate remarks. 
Using \eqref{eq:full-result}, we can easily compute the commutation relation
between the transformation  $U(\theta)$ by a constant $\theta$ 
and the operator $U(f)$ for the function with winding number $w_f=1$ is \begin{equation}
U(\theta) U(f) U(\theta)^{-1}= e^{2\pi i k\theta}U(f).
\end{equation}
As $U(1)=U(0)$, we find that the level $k$ should be an integer.

We will find that the sign appearing in \eqref{eq:gradedcommutation} is dictated by the parity of $k$.
Using \eqref{eq:full-result}, we find that 
\begin{equation}
U(f)U(g) = (-1)^{k w_f w_g } U(g)U(f).
\end{equation}
This means that $U(f)$ is fermionic if and only if $k$ and $w_f$ are both odd.
This also means that the fermion parity operator $(-1)^F$ when $k$ is odd can be taken to be
\begin{equation}
(-1)^F = U(\frac12),
\end{equation} where $\frac12$ is a constant map sending all of $S^1$ to $\frac12$.

\subsection{Comments}
\label{sec:sec-2-comments}

Before proceeding to the two derivations,
let us give a couple of comments.

We begin with a comment on the history of this result \eqref{eq:full-result}.
To the authors' knowledge, this formula
or essentially equivalently a 2-cocycle for this commutator map \eqref{eq:full-result},
first appeared in the work by Segal and collaborators in the 1980s, see e.g.~\cite[\S 2]{Segal:1981ap}
and \cite[\S 4.7, Eq.\ (13.1.2)]{MR900587}.
There, the expression was derived as a unique solution generalizing the zero-winding-number result \eqref{eq:B0andc0} to the nonzero-winding-number sector,
satisfying a covariance under the action of diffeomorphisms on $S^1$.
Our derivation in Sec.~\ref{sec:2nd-derivation} is similar, but the set of 
additional conditions imposed are different.
We hope that our alternative derivations would shed different light on 
this important commutator map.

Next, we point out that the anticommutation $U(f)U(g)=-U(g)U(f)$ when $k=1$ is not so surprising
from one point of view, as already mentioned e.g.~in \cite{Else:2014vma}.
Very schematically, from the conservation of $J^\mu$,
we should be able to introduce a scalar operator $\chi$ such that $J^t=\partial_x \chi$ and $J^x=-\partial_t \chi$.
Then, given a winding-number-one function $f(x)$ such that $f(x)=0$ when $x<x_0$, $f(x)=1$ when $x>x_0$
and  a jump at $x=x_0$, we would naively have \begin{equation}
U(f)\, \text{``}{=}\text{''}\, \exp 2\pi i \int f(x)J^t(x) dx =
\exp 2\pi i \int f(x) \partial_x \chi(x) dx
= \exp (- 2\pi i \chi(x_0)). \label{eq:fermionizationformula}
\end{equation}
This is the standard fermionization formula,
which expresses the fermion operator, $\psi=e^{-2\pi i \chi}$,
in terms of the $U(1)$ current $J^\mu$ and the corresponding scalar field $\chi$.
Therefore, it is perfectly normal that these operators supported on different points to anticommute.

Our discussion in Sec.~\ref{sec:1st-derivation} can be thought of as a more precise version 
of this quick and schematic derivation, which only applies to the case when the function $f(x)$ 
is a step function at $x=x_0$.
Our discussion also clarifies why and how this fermion operator $\psi$ can be thought of 
as (a limit of) \emph{unitary} operators corresponding to position-dependent symmetry transformations.
Actually, one important point of view of Segal and collaborators in \cite{MR900587}
was to give a mathematically rigorous definition of the 2d boson-fermion correspondence 
in terms of this step-function limit 
of the unitary operator for position-dependent symmetry transformation, which they called a \emph{blip}.

Finally, we would like to make a mathematical comment.
Let us set the level $k$ to be one.
Then, our commutator map  $\com(f,g)$, \eqref{eq:full-result}, is a function \begin{equation}
LU(1)\times LU(1)\to \bR/\bZ
\end{equation} which 
\begin{enumerate}
\item[(i)] defines a homomorphism for both variables $f$ and $g$,
\item[(ii)] satisfies $\com(f,g)=-\com(g,f)$, and
\item[(iii)] has the property $\com(f,f)=0$.
\end{enumerate}

Now, there is a concept in mathematics called differential cohomology $\hat H^d(X)$,
with a graded-commutative product $\hat H^d(X)\times \hat H^{d'}(X)\to \hat H^{d+d'}(X)$.
It was introduced by Cheeger and Simons in the 1980s \cite{CheegerSimons},
and is increasingly being used in the discussion of anomalies in the last several years in the hep-th community,
e.g.~in \cite{Cordova:2019jnf}.
Differential cohomology has the feature that \begin{equation}
\hat H^1(M_d)=\{ f: M_d\to S^1\}
\end{equation} and \begin{equation}
\hat H^{d+1}(M_d)=\bR/\bZ
\end{equation} where $M_d$ is any connected manifold of dimension $d$.
Taking $M_d=S^1$ in particular, we have $\hat H^1(S^1)=LU(1)$, and the anti-commuting product \begin{equation}
\hat H^1(S^1)\times \hat H^1(S^1) \to \hat H^2(S^1)
\end{equation} naturally gives a pairing $\tilde \com(f,g)$ \begin{equation}
\tilde \com: LU(1)\times LU(1)\to \bR/\bZ
\end{equation} which is also written as \begin{equation}
\tilde \com(f,g) = \int_{S^1} \hat f \hat g
\label{eq:gamma-tilde}
\end{equation} where $\hat f$ and $\hat g$ are functions $f,g$ viewed as elements of $\hat H^1(S^1)$.
This pairing
\begin{enumerate}
\item[(i)] defines a homomorphism for both variables $f$ and $g$, and
\item[(ii)] satisfies $\tilde \com(f,g)=-\tilde \com(g,f)$.
\end{enumerate}

These are very close to the properties of $\com(f,g)$ we want,
but $\tilde\com$ does not satisfy our condition (iii).
We need an explicit formula for  $\tilde \com(f,g)$ to see this.
The required formula was already given in Example 1.16 of the original paper \cite{CheegerSimons},
and there are various different-looking ways to present it. One such expression is
\begin{equation}
\tilde\com(f,g) = \int_0^{2\pi} f(x) g'(x) dx - w_f g(0),
\end{equation}
as explained in \cite{Cordova:2019jnf}.
A quick inspection shows that $\tilde \com(f,g)$ and $\com(f,g)$ (at $k=1$)  satisfy \begin{equation}
\tilde \com(f,g)=\com(f,g)+\frac12 w_f w_g. 
\label{eq:tilde-c-vs-c}
\end{equation}
In particular, \begin{equation}
\tilde \com(f,f)=1/2(w_f)^2
\end{equation}
which is indeed nonzero when the winding number $w_f$ is odd.

Note that $\com(f,g)$ and $\tilde \com(f,g)$ were introduced in \cite{Segal:1981ap, MR900587} and \cite{CheegerSimons} respectively,
both around 1985. It is unknown to the authors when their close connection was first realized;
at least it was mentioned in \cite{Freed:2006yc}.
We will see in Sec.~\ref{sec:4dSU(2)} how we can understand this relation using the general theory of invertible phases.

\subsection{First derivation}
\label{sec:1st-derivation}

Now that we have spent a couple of pages in the comments on the central result \eqref{eq:full-result},
let us move on to the derivations. The first derivation formulates the position-dependent symmetry operator by dividing space into patches. This formulation is the same as the one used for the time-dependent $\theta$-term in \cite{Cordova:2019jnf},
in which a physicist-friendly explanation of the differential cohomology product was given.

\paragraph{Mode expansion.} The standard mode expansion for the $U(1)$ current $J^t(x,t)$ is
\begin{equation}
\label{Jexp}
      J^t (x, t) =  \sum_{n \in \mathbb{Z}} \frac{1}{2\pi} J_n(t) \, e^{ i n x}\,.
\end{equation}
In the following, we omit the $t$-dependence of $J_n(t)$ and simply write $J_n$. The level-$k$ current algebra \eqref{eq:equal-time-commutator} is equivalent to the following equal-time commutation relation of $J_n$:
\begin{equation}
    [ J_n, J_m ] = k \,n \, \delta_{n + m, 0} \,.
\end{equation}
We define the primitive of $J^t(x)$ with respect to $x$, denoted by $\chi(x)$, such that $\partial_x\chi = J^t$, as
\begin{equation}
\label{chiexp}
    \chi (x, t) =  \sum_{n \in \mathbb{Z} \setminus \{0\}} \frac{1}{2\pi i n} J_n \, e^{ i n x } + \frac{1}{2\pi} J_0 \, x + \frac{1}{2\pi}I_0 \,.
\end{equation}
The extra mode $I_0$ obeys the commutation condition 
\begin{equation}
    [J_n,I_0] = ki\, \delta_{n,0}\,,
\end{equation}
so that the two new equal-time commutation relations are
\begin{align}
\label{com2}
    [J^t(x), \chi(y)] &=  k\frac{i}{2 \pi} \delta^{(\text{P})}(x - y) \,, \\
\label{com4}
    [\chi(x), \chi(y)] &=  k\frac{i}{2 \pi} \theta^{(\text{P}')} (x - y)  \,,
\end{align}
where the mode expansion of the periodic delta function $\delta^{(\text{P})}(x)$ is
\begin{align}
    \delta^{(\text{P})} (x) = \sum_{n \in \mathbb{Z} } \frac{1}{2\pi} e^{-i n x } \,,
\end{align}
and $\theta^{(\text{P}')}(x)$ is the quasi-periodic step function whose mode expansion is
\begin{align}
    \theta^{(\text{P}')} (x) = \sum_{n \in \mathbb{Z} \setminus \{0 \}  } \frac{i}{2\pi n} e^{-i n x } + \frac{1}{2\pi}x  \,.
\end{align}
The quasi-periodic step function satisfies 
\begin{align}
    \partial_x \theta^{(\text{P}')} (x) &= \delta^{(\text{P})} (x)  \,, \\
    \theta^{(\text{P}')} (x + 2\pi) &= \theta^{(\text{P}')} (x) + 1 \,,
\end{align}
and should be an odd function. This means that we have
\begin{align}
    \theta^{(\text{P}')}(x) = 
\begin{cases}
    m \,,        & x = m  \,, \\
    m + \frac{1}{2} \,, & 2\pi m < x < 2\pi (m+1) \,,
\end{cases}
\end{align}
for $m \in \mathbb{Z}$.

\paragraph{The regularized operator.}
The next ingredient that we need is the description of the spatial $S^1$ in terms of multiple closed intervals $\sigma_u \ (u = 1,2,...,n)$. 
We let $\sigma_{u,u+1}$ denote the points between adjacent intervals. 
We let $\sigma_{n,n+1}=\sigma_{0,1}$ be the same point between $\sigma_n$ and $\sigma_1$,
and let the value of $\chi$ jump there, i.e.~$\chi (\sigma_{n,n+1}) = \chi(\sigma_{0,1}) + J_0$. An illustration of the patching is shown in Fig.~\ref{fig:patch}.  

\begin{figure}
\centering
\includegraphics[width=0.6\hsize]{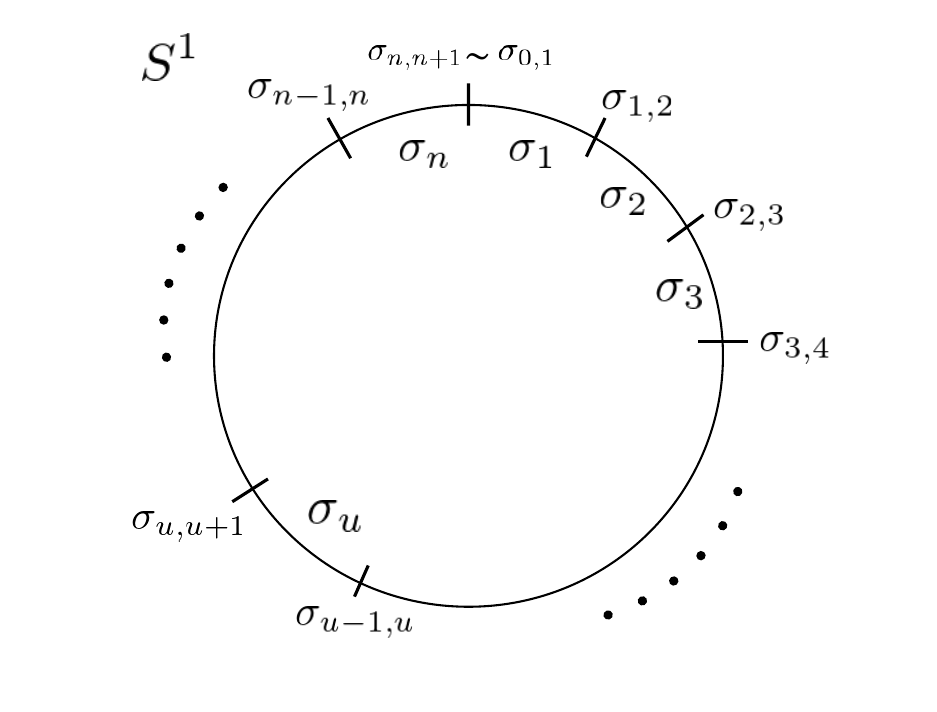}
\caption{The patching of $S^1$ with multiple intervals $\sigma_u$ and points $\sigma_{u,u+1}$. 
\label{fig:patch}}
\end{figure}

We define $f$ as a collection of functions
\begin{align}
    f = \left( \{f_u\}, \{w^f_{u,u+1}\}  \right) \,, 
\end{align}
where $f_u$ is a continuous and piecewise-smooth function $f_u:\sigma_u \rightarrow \mathbb{R}$ and $w^f_{u,u+1}$ are integers such that they satisfy
\begin{align}
    f_u(\sigma_{u,u+1}) - f_{u+1}(\sigma_{u,u+1}) = w^f_{u,u+1} \,.
\end{align}
For convenience, we let $f_{n+1} (\sigma_{n,n+1}) = f_1 (\sigma_{0,1})$ and $w^f_{n,n+1} = w^f_{0,1} = w^f_{n,1}$, so that we have $f_n (\sigma_{n,n+1}) - f_1(\sigma_{0,1}) = w^f_{n,1}$. The winding number of $f$ is given by $w_f =  \sum_{u=1}^n w^f_{u,u+1}$.

Then, we define the position-dependent symmetry operator $U(f)$ as
\begin{align}
\label{eq:trun_sym_op}
    U(f) = \exp \left[ 2\pi i \left( \sum_{u=1}^n \int_{\sigma_u} dx\, f_u(x) \partial_x \chi (x) - \sum_{u = 1}^n w^f_{u,u+1}\, \chi(\sigma_{u,u+1})  \right) \right] \,.
\end{align}
This formula is invariant under the gauge transformation
\begin{align}
    f_u &\sim f_u + m_u \,, \label{eq:gauge-trsf-of-patched-f}\\
    w^f_{u,u+1} &\sim w^f_{u,u+1} + m_u - m_{u+1} \,, \label{eq:gauge-trsf-of-patched-w}
\end{align}
where $m_u$ is an integer.\footnote{Under the gauge transformation \eqref{eq:gauge-trsf-of-patched-f}--\eqref{eq:gauge-trsf-of-patched-w}, the position-dependent symmetry operator $U(f)$ changes to $U(f)e^{2\pi i m_1 J_0}$.
Since eigenvalues of $J_0$ are integers, $U(f)$ is gauge invariant.} Using this transformation, we can change the choice of the partitions $\sigma_u$ freely.
In particular, we can take it to be composed of a single patch $\sigma_1=[0,2\pi]$.
With the above discussion, we arrive at a simplified version of $f$ described by 
\begin{align}
    f \colon [0,2\pi]\to \bR, \qquad
    f(2\pi) - f(0) = w_f \,,
\end{align}
with which $U(f)$ is given simply by\footnote{%
As already mentioned,
the exponent is the differential cohomology pairing when $\chi(x)$ and $J^t(x)=\partial_x\chi(x)$ are
$c$-numbers.
}
\begin{align}
\label{eq:trun_sym_op_single}
    U(f) = \exp \left[ 2\pi i \left(  \int_{\sigma_1} dx\, f(x)  J^t (x) - w_f\, \chi(0)  \right) \right] \,.
\end{align}
This form of $U(f)$ generalizes the fermionization formula \eqref{eq:fermionizationformula}, with the winding effect isolated in the second term. The function $f$ in the integrand now serves as a good test function for the periodic delta function, enabling us to compute the integration as in the zero-winding-number sector.

\paragraph{The 2-cocycle and the commutator.} 
To perform the computation to get the 2-cocycle and the commutator map, we need to take into account the normal ordering with respect to $J_n$ and define the normal ordered position-dependent symmetry operator $U_{\rm no}(f)$ as
\begin{align}
\label{eq:trun_sym_op_nord}
    U_{\text{no}}(f)  = {:U(f):} \,.
\end{align}
The  explicit form reads
\begin{align}
\begin{split}
\label{eq:trun_sym_op_nord_single}
    U_{\text{no}}(f) &={} :\exp \left[ 2\pi i  \left(  \int_{\sigma_1} dx\, f(x) J^t (x) - w_f\, \chi(0)  \right) \right] : \\
    &= \exp \left[ 2\pi i \sum_{s=1}^{\infty} \left(  \int_0^{2\pi} dx\, f(x) \frac{1}{2\pi} e^{-is x}  - w_f\, \frac{1}{2\pi i (-s)} e^{-is x}  \right) J_{-s} \right]  \\
    & \quad \times \exp \left[ 2\pi i  \left( - w_f \frac{1}{2\pi}  \right) I_0 \right] \\
    & \quad \times \exp \left[ 2\pi i  \left(  \int_0^{2\pi} dx\, f(x) \frac{1}{2\pi}   - w_f\,\frac{1}{2\pi} x  \right) J_0 \right] \\
    & \quad \times \exp \left[ 2\pi i \sum_{s=1}^{\infty} \left(  \int_0^{2\pi} dx\, f(x) \frac{1}{2\pi} e^{isx}  - w_f\, \frac{1}{2\pi i s} e^{isx}  \right) J_s \right]\,.
\end{split}
\end{align}
Then, the result of the 2-cocycle for $f$ and $g$ that have the same winding point can be computed using our commutation relations and the BCH formula:
\begin{align}
\begin{split}
    \coc(f,g) 
    =  -k & \left(   \int_{0}^{2\pi} dx\, \int_{0}^{2\pi} dy\, \partial_x f(x) g(y)  \sum_{s=1}^{\infty} \frac{1}{2\pi}   e^{is(x - y)} \right. \\
    & \left. \quad - \int_{0}^{2\pi} dx\,  f(x) w_g \sum_{s=0}^{\infty} \frac{1}{2\pi}  e^{isx}  -  w_f w_g  \sum_{s=1}^{\infty} \frac{i}{2\pi s} \right) \\
    = k & \left( \int_{0}^{2\pi} dx\, \int_{0}^{2\pi} dy\, f(x) 
    \partial_y g(y)  \sum_{s=1}^{\infty} \frac{1}{2\pi}    e^{is(x - y)} \right. \\
    & \left. \quad - \int_{0}^{2\pi} dy\, w_f g(y) \sum_{s=1}^{\infty} \frac{1}{2\pi} e^{-isy} + 2 \int_0^{2\pi} dx\, f(x)  w_g \frac{1}{2\pi}  +  w_f w_g  \sum_{s=1}^{\infty} \frac{i}{2\pi s} \right)
\end{split}
\label{eq:coc-by-patch-calc}
\end{align}
where we see the last term is divergent and regularization-dependent. Once we fix a regularization scheme
(an example of which will be explained in Appendix~\ref{sec:general-n-calc}),
we will get an identical term in $\coc(g,f)$. Taking the difference, we have the commutator map
\begin{equation}
 \begin{split}
    \com(f,g) &= \coc(f,g)  - \coc(g,f)  \\
        &= k\left(\int_{0}^{2\pi} dx\,  f(x) g'(x) - \int_{0}^{2\pi} dy \, w_f g(y) \, \delta^{(\text{P})}(y) \right) \\
        &= k\left(\int_{0}^{2\pi} dx\,  f(x)  g'(x) - w_f g(0) - \frac{1}{2} w_f w_g \right)\,,
\end{split}
 \end{equation}
where in the last step we used an endpoint regularization of the periodic delta function. Using integration by parts, we can rewrite the result into the form
\begin{equation}
\label{eq:commap_1}
     \com (f,g) 
    = \frac{k}{2} \left(\int_{0}^{2\pi} dx\, \left( f(x) 
     g'(x) - f'(x) \, g(x) \right)  + f(0) w_g - w_f  g(0) \right)  \,.
\end{equation}
Note that the 2-cocycle $\coc(f,g)$ and the commutator map $\com(f,g)$ are defined modulo $1$. The above results are for the case of $n=1$ in our language of $S^1$ patching, which reproduces \eqref{eq:full-result}. Alternatively, one can also ask for an expression for general $n$, as well as for the functions $f$ and $g$ whose windings are isolated at different points. 
We provide a detailed step-by-step explanation  in Appendix~\ref{sec:general-n-calc}.

\subsection{Second derivation}
\label{sec:2nd-derivation}

As mentioned above, there is another way to bypass the demand of an explicit form of $U(f)$ to obtain the explicit commutator map \eqref{eq:full-result} satisfying the graded commutativity \eqref{eq:gradedcommutation}.
For this, what we have to do is to extend the commutator map $\com_0$ (and the $2$-cocycle $\coc_0$ if needed) from $LU(1)_0$ to $LU(1)$.
We will omit the subscript $0$ to indicate quantities associated to the larger group $LU(1)$, e.g.\  $\com$ and $\coc$. 

\paragraph{The commutator map $\com$.}
To achieve the graded locality condition \eqref{eq:gradedcommutation},
only aiming at the commutator map $\com$ without considering the cocycle $\coc$ is sufficient.
Our starting point is the commutator map $\com_0$ given as in~\eqref{eq:B0andc0}.
Recall that an element of the zero-winding-number subgroup $LU(1)_0$ of the loop group $LU(1)$ is represented as $\exp(2\pi i f_0)$ using a continuous and piecewise-smooth function $f_0:S^1\to\bR$,
and the commutator map $\com_0$ in \eqref{eq:B0andc0} is a map $(f_0,g_0) \mapsto \com_0(f_0,g_0) \in \bR/\bZ$
such that it can induce a map $e^{2\pi i \com_0} : LU(1)_0 \times  LU(1)_0 \to U(1)$.
Similarly, an element of $LU(1)$ can be represented as $\exp(2\pi i f)$ using a function $f$ in
\begin{align}
\cF := \left\{f:\bR\to\bR \ \middle| \begin{array}{c} f(x+2\pi) = f(x) \bmod 1 \text{ for any $x$,} \\ \text{continuous and piecewise-smooth} \end{array} \right\}.
\end{align}
Now, we want a map $\com:\cF\times\cF \to \bR/\bZ$ satisfying the following conditions.
\begin{itemize}
\item[($\com$-0)] $\com$ is a map $\cF \times \cF \to \bR/\bZ$ such that it can induce a map $e^{2\pi i \com} : LU(1) \times LU(1) \to U(1)$.
That is,
\begin{align}
\com(f+1, g) = \com(f, g+1) = \com(f,g) \mod 1.
\end{align}

\item[($\com$-1)] $\com$ is bi-additive 
\begin{align}
\com(f+h,g)=\com(f,g)+\com(h,g), \quad \com(f,g+h)=\com(f,g)+\com(f,h) \mod 1,
\end{align}
and alternating
\begin{align}
\com(f,f)=0 \mod 1.
\end{align}
These are the conditions for $\com$ to be a commutator map.

\item[($\com$-2)] For $f_0,g_0\in \cF$ with winding number zero,
$\com$ reduces to $\com_0$ in \eqref{eq:B0andc0}. That is, we have
\begin{align}
\com(f_0,g_0) = \com_0(f_0,g_0)= & \ \frac{k}{2}\int_{S^1}(f_0(x)g_0'(x)-g_0(x)f_0'(x))dx \mod 1.
\end{align}

\item[($\com$-3)] $\com$ satisfies the graded locality condition\footnote{
We can actually weaken the condition ($\com$-3) to a condition ($\com$-3') which says that  
$\com(f,g)\in I \subset \bR/\bZ$, where $I$ contains $0$ as an isolated point, if $\supp f\cap\supp g=\varnothing$.
Then, the proof of Lemma \ref{lemma:wn-0-1-commutator} in Appendix~\ref{sec:proof-of-uniquness} is still valid.
As a result,
if this $I$ contains $\frac{1}{2}$,
then the uniqueness theorem \ref{thm:uniqueness-com} of the commutator map still follows,
and if not,
then the uniqueness theorem \ref{thm:uniqueness-com} modified so that $k$ is an even integer follows.
}
\begin{align}
\com(f,g) \in \frac{1}{2} \bZ \quad \text{if} \quad \supp f \cap \supp g = \varnothing.
\end{align}
Here, $\supp f := \overline{\{x\in [0,2\pi] \mid f(x) \neq 0 \bmod 1\}}$ for $f\in \cF$,
where $\overline{U}$ denotes the closure of the subset $U \subset \bR$.
\end{itemize}

Our central result is the following theorem\footnote{
In addition,
this unique commutator map $\com$ of Theorem \ref{thm:uniqueness-com} is invariant under any reparameterization of $S^1$.
\if0
More precisely,
remember that we took the local coordinate of $S^1$ as a chart $(U_1=S^1\setminus\{p_1\}, \phi_1:U_1\to(0,2\pi))$,
and an orientation-preserving change to another chart $(U_2=S^1\setminus\{p_2\}, \phi_2:U_2\to(a,b))$
can be regarded as a composition of a shift $(0,2\pi)\to(p,p+2\pi)$ by $p:=\phi_1(p_2)$
and an increasing bijection $\phi_{12}:(p,p+2\pi) \to (a,b)$.
Then we can show
\begin{align*}
\com(f,g) & = \frac{k}{2} \left( \int_{p}^{p+2\pi} (f(x)g'(x)-g(x)f'(x)) dx + f(p)w_g-w_fg(p) \right)\\
& = \frac{k}{2} \left( \int_a^b (\tilde{f}(y)\tilde{g}'(y)-\tilde{g}(y)\tilde{f}'(y)) dy + \tilde{f}(a)w_g-w_f\tilde{g}(a) \right),
\end{align*}
where $\tilde{f}:=f\circ\phi_{12}^{-1}$.
This can be checked by an explicit calculation,
or we can also see it in the course of the proof in Appendix \ref{sec:proof-of-uniquness}.
\fi
\if0
More precisely,
remember that we took the local coordinate of $S^1$ as a chart $(U_1=S^1\setminus\{p_1\}, \phi_1:U_1\to(0,2\pi))$.
The effects of an orientation-preserving change to another chart are
(i) shift of the starting point from $p_1$ to another point $p_2$,
which can be represented as $(0,2\pi)\to(p,p+2\pi); x\mapsto x+p$,
and (ii) change of coordinates which can be represented as an increasing bijection $\phi_{12}:(p,p+2\pi) \to (a,b)$.
Then we can show
\begin{align*}
\com(f,g) & = \frac{k}{2} \left( \int_{p}^{p+2\pi} (f(x)g'(x)-g(x)f'(x)) dx + f(p)w_g-w_fg(p) \right)\\
& = \frac{k}{2} \left( \int_a^b (\tilde{f}(y)\tilde{g}'(y)-\tilde{g}(y)\tilde{f}'(y)) dy + \tilde{f}(a)w_g-w_f\tilde{g}(a) \right),
\end{align*}
where $\tilde{f}:=f\circ\phi_{12}^{-1}$.
This can be checked by an explicit calculation;
we use $f(x+2\pi)=f(x)+w_f$ in the first equation,
and the second equation is just a change of coordinates.
Or we can also see it in the course of the proof in Appendix \ref{sec:proof-of-uniquness}. 
\fi
We can see it in the course of the proof in Appendix \ref{sec:proof-of-uniquness},
but we can also check it explicitly as follows.
First, remember that we took the local coordinate of $S^1$ as a chart $(U=S^1\setminus\{pt\}, \, \phi:U\to(0,2\pi))$.
Any orientation-preserving change to another coordinate system $ \tilde{\phi}:U\to(a,b)$ is represented as the increasing bijection $\xi = \tilde{\phi} \circ \phi^{-1} :(0,2\pi) \rightarrow (a,b)$,
and it is obvious that
\begin{align*}
\com(f,g) & = \frac{k}{2} \left( \int_a^b (\tilde{f}(x)\tilde{g}'(x)-\tilde{g}(x)\tilde{f}'(x)) dx + \tilde{f}(a)w_g-w_f\tilde{g}(a) \right),
\end{align*}
where $\tilde{f}:=f\circ\xi^{-1}$.
Second, the shift of the starting point from $pt\in S^1$ to another point $\tilde{pt}\in S^1$ corresponds to the shift from $(0,2\pi)$ to $(p,p+2\pi)$ under the fixed coordinate system.
We can check by an explicit calculation that
\begin{align*}
\com(f,g) & = \frac{k}{2} \left( \int_{p}^{p+2\pi} (f(x)g'(x)-g(x)f'(x)) dx + f(p)w_g-w_fg(p) \right) \,.
\end{align*}
In short, this commutator map $\com$ is $\mathrm{Diff}^+(S^1)$-invariant.
This invariance property was taken up more prominently in \cite[\S2]{Segal:1981ap}.
\label{fn:reprm-inv}


}
whose proof is in Appendix \ref{sec:proof-of-uniquness}:

\begin{theorem}
\label{thm:uniqueness-com}
If there is a commutator map $\com$ satisfying the consistency conditions {\rm ($\com$-0)--($\com$-3)},
then $k$ is an integer.
For any integer $k$, there is a unique such commutator map $\com$,
and its explicit formula can be given as
\begin{equation}
    \com(f,g) = \frac{k}{2} \left( \int_0^{2\pi} \left( f(x) g'(x) - g (x) f'(x) \right) dx + f(0) w_g - w_f g(0)\right) 
 \,. \label{eq:com-map-formula}
\end{equation}
\end{theorem}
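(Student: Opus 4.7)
\emph{Strategy and reduction.} I would treat existence and uniqueness separately. For existence with $k\in\bZ$, one directly verifies that the formula \eqref{eq:com-map-formula} is bi-additive and alternating (by inspection of the integral expression), is well-defined modulo $\bZ$ (precisely because $k\in\bZ$), reduces to $\com_0$ for winding-zero inputs, and lies in $\tfrac{1}{2}\bZ/\bZ$ when $\supp f\cap\supp g=\varnothing$ since the integrand then vanishes and only the boundary contribution $\tfrac{k}{2}(f(0)w_g-w_f g(0))$ remains. For uniqueness and the necessity of $k\in\bZ$, I would fix any $h\in\cF$ of winding number $1$ and decompose every $f\in\cF$ as $f=f_0+w_f h$ with $f_0:=f-w_f h\in\cF_0$. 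Applying bi-additivity (c-1), the alternating identity $\com(h,h)=0$, and (c-2) yields
\begin{equation*}
\com(f,g) \;=\; \com_0(f_0,g_0)+w_g\,\Lambda_h(f_0)-w_f\,\Lambda_h(g_0),\qquad \Lambda_h(f_0):=\com(f_0,h),
\end{equation*}
which reduces the problem to pinning down a single $\bZ$-linear functional $\Lambda_h:\cF_0\to\bR/\bZ$.

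\emph{Pinning down $\Lambda$.} I would specialize $h=h_p$ to a winding-$1$ \emph{blip}: a smoothed step function with jump at $p\in S^1$ supported in a small neighborhood of $p$. For two such choices the difference $h_p-h_q\in\cF_0$ is a bump function whose (c-2) pairing with $f_0$ localizes at the two jumps and evaluates to
\begin{equation*}
\Lambda_{h_p}(f_0)-\Lambda_{h_q}(f_0) \;=\; k\bigl(f_0(p)-f_0(q)\bigr)\pmod{1}.
\end{equation*}
Separately, whenever $\supp f_0\cap\supp h_p=\varnothing$, I would argue via rational scaling that $\Lambda_{h_p}(f_0)\equiv 0\pmod{1}$. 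The key observation is that $f_0$ is constant integer $n$ on the connected component of $(\supp f_0)^c$ containing $\supp h_p$, so the integer-shifted representative $\tilde f_0:=f_0-n$ (same element of $LU(1)_0$ by (c-0)) satisfies $\tilde f_0\equiv 0$ on $\supp h_p$, and hence $\supp(\tilde f_0/2)\cap\supp h_p=\varnothing$. Then (c-3) forces $\com(\tilde f_0/2,h_p)\in\tfrac{1}{2}\bZ/\bZ$, and bi-additivity gives $\com(f_0,h_p)=\com(\tilde f_0,h_p)=2\com(\tilde f_0/2,h_p)\equiv 0\pmod{1}$.

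\emph{Conclusion and main obstacle.} Combining the two displayed identities and evaluating the second at a point $p\notin\supp f_0$ (so $f_0(p)\in\bZ$), consistency as $p$ ranges over the components of $(\supp f_0)^c$, which can carry distinct integer values of $f_0$ (e.g.\ an $f_0$ plateauing at $0$ on one component and at $1$ on another), simultaneously forces $k\in\bZ$ and fixes $\Lambda_{h_p}(f_0)\equiv k f_0(p)\pmod{1}$ on all of $\cF_0$. A general winding-$1$ function $h$ is then handled by writing $h=h_0+(h-h_0)$ with $h-h_0\in\cF_0$, applying (c-2) to the correction, and integrating by parts to symmetrize the integrand, producing exactly \eqref{eq:com-map-formula}. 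The main obstacle I anticipate is careful bookkeeping of supports under the scaling $f_0\mapsto f_0/2$, since generically $\supp(f_0/2)\supsetneq\supp f_0$ at points where $f_0$ takes nonzero (odd) integer values; the preliminary integer shift from $f_0$ to $\tilde f_0$, legitimized by (c-0), is precisely what circumvents this, and it is also the mechanism underlying the weakened hypothesis (c-3$'$) mentioned in the footnote with $I\subset\bR/\bZ$ having $0$ as an isolated point.
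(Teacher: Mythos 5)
Your overall strategy coincides with the paper's: decompose $f=f_0+w_f h$ against a reference winding-one function, prove the vanishing of $\com(f_0,h)$ for disjoint supports by first shifting $f_0$ by its integer plateau value and then exploiting divisibility together with ($\com$-3), and compare reference functions at different locations through their winding-zero difference and ($\com$-2). (The paper's Lemma~\ref{lemma:wn-0-1-commutator} uses $\frac1t\tilde f_0$ for all integers $t$, where your $t=2$ already suffices; your derivation of $k\in\bZ$ from the mismatch of plateau values across components of $(\supp f_0)^c$ is a valid variant of the paper's closing observation that $\com(f+1,g)=\com(f,g)+kw_g$.) However, two steps do not hold as written. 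First, in the existence direction you claim that for $\supp f\cap\supp g=\varnothing$ ``the integrand vanishes and only the boundary contribution remains.'' This is false: supports are defined modulo $1$, so $f$ can equal a nonzero integer on $\supp g$, where $g'\neq 0$, and the integrand does not vanish; moreover the boundary term $\frac{k}{2}(f(0)w_g-w_fg(0))$ alone need not lie in $\frac12\bZ$ when $0\in\supp f$ and $f(0)\notin\bZ$. What actually happens is a cancellation of the non-integral piece $w_gf(0)$ between $-\int_0^{2\pi}gf'\,dx$ and $+f(0)w_g$, which has to be exhibited by the short case analysis the paper alludes to ($0\in\supp g$ versus $0\notin\supp g$).

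Second, your determination of $\Lambda_{h_p}$ only applies to those $f_0$ whose support is a proper subset of $S^1$, since you must place the auxiliary blip $h_q$ inside a component of $(\supp f_0)^c$; a generic $f_0\in\cF$ of winding number zero (e.g.\ one that never takes integer values) has $\supp f_0=S^1$, and then no admissible $q$ exists. The paper closes exactly this hole by splitting $h_0=(h_0)_L+(h_0)_R$ with a partition of unity so that each summand vanishes identically on an interval; in your language you need the analogous additive decomposition $f_0=f_0^{(1)}+f_0^{(2)}$ together with bi-additivity of $\Lambda_{h_p}$ before invoking the disjoint-support vanishing. Finally, a smaller point: for a smoothed blip the exact identity is $\Lambda_{h_p}(f_0)=k\int_0^{2\pi}f_0\,h_p'\,dx\bmod 1$ rather than $kf_0(p)$; the latter is only the sharp-blip limit, and one should carry the exact form through the final integration by parts to land on \eqref{eq:com-map-formula}. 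With these repairs your argument reproduces the paper's proof.
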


\paragraph{The 2-cocycle $\coc$.}
If we are just interested in the commutator map, the question had already been settled in the previous paragraph.
We might be interested in the extension of the $2$-cocycle from the one $\coc_0$ of \eqref{eq:B0andc0} on $LU(1)_0$ to the one $\coc$ on $LU(1)$ as well.

From the general theory of group extension,
the cohomology classes of such $2$-cocycles $e^{2\pi i \coc}:LU(1) \times LU(1) \to U(1)$ are in one-to-one correspondence with the isomorphism classes of the group extensions of $LU(1)$ by $U(1)$.
Since we are considering the central extensions,
we can associate a commutator map $e^{2\pi i \com}$ by $\com(f,g) = \coc(f,g) - \coc(g,f)$ to each of the isomorphism classes of such extensions.

Conversely,
\if0
a theorem \cite[Prop.\ A.1]{Freed:2006ya}, \cite[Theorem 1]{Freed:2006yc}
applies to the central extensions of $LU(1)$ by $U(1)$,
and ensure that commutator maps (bi-multiplicative alternating maps $LU(1) \times LU(1) \to U(1)$) are in one-to-one correspondence with the isomorphism classes of such extensions.
\fi
as $LU(1)$ can be generated by $LU(1)_0$ and a single winding-number-1 function 
$z: S^1\to U(1)$, it is clear that the isomorphism class of the central extension formed by $U(f)$ for $f\in LU(1)$ 
is determined\footnote{
The details are as follows.
Suppose a commutator map $e^{2\pi i\com}:LU(1)\times LU(1)\to U(1)$ is given,
and a central extension $1 \to U(1) \to \widetilde{LU(1)} \to LU(1) \to 0$, whose restriction $1 \to U(1) \to \widetilde{LU(1)}_0 \to LU(1)_0 \to 0$ is already specified, has the commutator map $e^{2\pi i\com}$.
If we take a section $\tilde{U}:LU(1)\to\widetilde{LU(1)}$ so that $\tilde{U}(f) = \tilde{U}(f_0)\tilde{U}(z)^{w_f}$ for any $f=f_0+w_fz \in LU(1)$,
then we can see that the multiplication law of the central extension $\widetilde{LU(1)}$ is specified only by $\com$ as $\tilde{U}(f)\tilde{U}(g)=e^{2\pi i (\coc_0(f_0,g_0)+w_f\com(z,g_0))}\tilde{U}(f+g)$.
The 2-cocycle of any other section $U$ is in the same cohomology class as that of $\tilde{U}$.
} by the commutator map between $
U(z)$ and all the operators $U(f_0)$ for $f_0\in LU(1)_0$.
This concludes the uniqueness of the cohomology class of the $2$-cocycle $\coc$ we are looking for.

Let us state it in a form parallel to Theorem \ref{thm:uniqueness-com}.
We then start with the following conditions.
\begin{itemize}
\item[($\coc$-0)] $\coc$ is a map $\cF \times \cF \to \bR/\bZ$
such that it can induce a map $e^{2\pi i\coc}:LU(1) \times LU(1) \to U(1)$.
That is,
\begin{align}
\coc(f+1, g) = \coc(f, g+1) = \coc(f,g) \mod 1.
\end{align}

\item[($\coc$-1)] $\coc$ satisfies the cocycle condition
\begin{align}
\coc(g,h) - \coc(f+g,h) + \coc(f,g+h) - \coc(f,g) = 0 \mod 1.
\end{align}

\item[($\coc$-2)] For $f_0,g_0\in \cF$ with winding number zero,
$\coc$ reduces to $\coc_0$ in \eqref{eq:B0andc0}.
That is, we have
\begin{align}
\coc(f_0,g_0) = \frac{k}{2}\int_{S^1}f_0(x)g_0'(x)dx \mod 1.
\end{align}

\item[($\coc$-3)] $\coc$ is a $2$-cocycle for the commutator map $\com$ in (\ref{eq:com-map-formula}).
That is, $k$ is an integer and
\begin{align}
\coc(f,g)-\coc(g,f) = \com(f,g) \mod 1.
\end{align}
\end{itemize}

\begin{theorem}
There is a unique $2$-cocycle $\coc$ up to coboundary satisfying the conditions {\rm($\coc$-0)--($\coc$-3)}, and the explicit formula for a representative\footnote{
In the choice of a representative $\coc$,
there is a degree of freedom of adding a coboundary term.
This leads to some variations of the $2$-cocycles appearing in literature \cite[\S2]{Segal:1981ap}, \cite[\S 4.7, Eq.\ (13.1.2)]{MR900587}, and \cite[Eq.\ (4.2)]{Bohm:1993yg}.
The cocycle \eqref{eq:coc-by-patch-calc} we derived in the previous subsection is also another representative.
In particular, when the level $k$ is even,
we can choose $\coc$ so that it is $\mathrm{Diff}^+(S^1)$-invariant (see footnote \ref{fn:reprm-inv}).
For example, $\coc=\frac{1}{2}\com$ \cite[\S2]{Segal:1981ap}
and $\coc(f,g)=\frac{k}{2}\left(\int_0^{2\pi} f(x)g'(x)dx - w_f g(0)\right)$ \cite[Eq.\ (4.2)]{Bohm:1993yg} are $\mathrm{Diff}^+(S^1)$-invariant,
but satisfy ($\coc$-0) only when $k$ is even.
} can be given as
\begin{equation}
    \coc(f,g) =  \frac{k}{2}\left(\int_0^{2\pi} f(x)g'(x)dx + w_g f(0)\right)\,.
\end{equation}
\end{theorem}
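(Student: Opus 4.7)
The plan is to split the statement into a uniqueness claim and an existence claim, both of which follow relatively directly once Theorem \ref{thm:uniqueness-com} is in hand. Uniqueness was essentially sketched in the paragraph preceding the theorem; my task is to make it airtight. Existence reduces to verifying the four conditions on the displayed candidate formula.

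For uniqueness, I would fix a winding-number-one element $z \in \cF$ once and for all, so that every $f \in \cF$ decomposes as $f = f_0 + w_f z$ with $f_0 \in LU(1)_0$. Suppose $\coc$ satisfies ($\coc$-0)--($\coc$-3), and consider the associated central extension with a section $\tilde U$ normalized by $\tilde U(f) = \tilde U(f_0)\,\tilde U(z)^{w_f}$. Expanding $\tilde U(f)\tilde U(g)$ by first pulling $\tilde U(z)^{w_f}$ past $\tilde U(g_0)$ using the commutator $\com(z,g_0)$ supplied by Theorem \ref{thm:uniqueness-com}, and then multiplying the zero-winding pieces via $\coc_0$, fixes the entire multiplication law of the extension in terms of $\com$ and $\coc_0$, both of which are already unique. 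Hence the cohomology class of $\coc$ is unique. Integrality of $k$ is not a separate burden, as it is built into ($\coc$-3) by Theorem \ref{thm:uniqueness-com}.

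For existence, I would take the candidate $\coc(f,g) = \frac{k}{2}\bigl(\int_0^{2\pi} f(x)g'(x)\,dx + w_g f(0)\bigr)$ and verify the conditions one by one. Condition ($\coc$-2) is immediate, since setting $w_f = w_g = 0$ kills the boundary term. Condition ($\coc$-3) is a single subtraction producing $\frac{k}{2}[\int(fg' - gf')\,dx + w_g f(0) - w_f g(0)]$, which matches \eqref{eq:com-map-formula}. Condition ($\coc$-1) is a routine expansion of $\coc(g,h) - \coc(f+g,h) + \coc(f,g+h) - \coc(f,g)$ using the additivity $w_{g+h} = w_g + w_h$; after collecting like terms, the integrals cancel pairwise and the boundary values cancel pairwise, yielding zero exactly (not merely mod $1$).

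The step that needs the most care is ($\coc$-0). Under $g \mapsto g+1$ invariance is manifest since $(g+1)' = g'$ and $w_{g+1} = w_g$. Under $f \mapsto f+1$, however, the integral picks up $\int g'\,dx = w_g$ and the boundary term picks up an additional $w_g$, for a total shift of $k w_g$. This vanishes modulo $1$ precisely because ($\coc$-3) forces $k \in \bZ$. I expect this to be the main (if mild) obstacle, since it is the step where the quantization of $k$ feeds back into the well-definedness of $\coc$ as a function on $LU(1) \times LU(1)$ rather than on $\cF \times \cF$.
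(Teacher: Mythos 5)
Your proposal is correct and follows essentially the same route as the paper: the uniqueness argument via the normalized section $\tilde U(f)=\tilde U(f_0)\tilde U(z)^{w_f}$ and the determination of the multiplication law by $\coc_0$ and $\com(z,\cdot)$ is exactly the argument in the paper's footnote, and your direct verification of ($\coc$-0)--($\coc$-3) for the displayed representative (including the observation that the shift $kw_g$ under $f\mapsto f+1$ is where integrality of $k$ enters) correctly supplies the existence half that the paper leaves implicit.
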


\section{\texorpdfstring{$\bZ_n$}{Zn} symmetry in two dimensions}
\label{sec:2dZ_n}

In the last section, we discussed the projective phases associated to the position-dependent
$U(1)$ symmetry transformation in two dimensional quantum field theories.
In this section, we would like to study what we can learn from this 
about the anomalies of the cyclic group $\bZ_n$ by embedding it into $U(1)$.
To do this, we first need to recall the general theory of anomalies of finite group symmetry in two dimensions.

This section is organized as follows.
In Sec.~\ref{sec:finite-bosonic}, we review the standard story of 
how the anomalies of finite symmetry group $G$ 
can be extracted in the Hamiltonian formalism in two dimensions.
In Sec.~\ref{sec:2dfer}, we then extend this analysis to the fermionic theories in two dimensions.
This is also basically known, but we provide detailed discussions here as 
they are often left in the literature as exercises to the reader.
In Sec.~\ref{sec:embedding}, we apply the formalism reviewed in these two subsections
to the subgroup $\bZ_n\subset U(1)$.

\subsection{Anomaly of finite group \texorpdfstring{$G$}{G} in 2d bosonic theories}
\label{sec:finite-bosonic}

It is by now a common knowledge that the anomaly of a finite group $G$ in two-dimensional bosonic theories
is characterized by the group cohomology $H^3(BG;U(1))$.
This can be explained in various ways, but a Hamiltonian treatment is required for our purpose.
How to achieve this was long known in the algebraic quantum field theory community, see e.g.~\cite[Sec.~4.2]{Muger:2005ra}.
We present it following the argument of Else and Nayak~\cite{Else:2014vma}, who gave it in the context of spin chains, which we slightly adapt to the continuum language.

In the Hamiltonian picture of QFT, symmetry transformations acting on the states of Hilbert space $\cH$ are implemented by unitary operators $U_g$ 
corresponding to the action of $g$ on the entirety of the spatial slice without boundary.
In this case the assignment $g \mapsto U_g$ furnishes a (projective) representation of $G$,
i.e.~we have 
\begin{equation} \label{eq:repG}
    U_g U_h \propto  U_{g h} \,.
\end{equation}

To extract the anomaly of two-dimensional theories, 
we consider applying the symmetry operation $g$ on a subregion $W$ of the spatial slice.
Let $U_g$ denote the corresponding unitary operator.
Such manipulations would change the equation \eqref{eq:repG} to
\begin{equation} \label{eq:modifiedrepG}
    U_g U_h \propto  \,u_{g,h} U_{g h} \,,
\end{equation}
where $u_{g,h}$ is a unitary operator supported in the neighborhood of the boundary $\partial W$,
depending on how the jumps of the symmetry parameter at the boundary $\partial W$ are regularized.
These operators $u_{g,h}$ were called as \emph{fusion operators} in~\cite{Seifnashri:2023dpa}.

\begin{figure}[h]
\centering
\begin{tikzpicture}
\begin{axis}[
    hide y axis,
    axis x line=none,
    xmin=-1.5, xmax=12.5,
    ymin=0, ymax=3,
    width=13cm, height=5cm,
    clip=false,
    domain=-1.5:12.5,
    samples=200,
]

\addplot [yellow, opacity=0.4, fill=yellow, draw=none, forget plot] coordinates {(2.3,-1) (2.3,3) (3.2,3) (3.2,-1) (2.3,-1)};
\addplot [yellow, opacity=0.4, fill=yellow, draw=none, forget plot] coordinates {(8.8,-1) (8.8,3) (9.7,3) (9.7,-1) (8.8,-1)};

\addplot [thick, green!60!black, smooth, domain=-1.5:2.3] {0};
\addplot [thick, green!60!black, smooth, domain=2.3:3.2] {2.5 * (1 - cos(deg(pi*(x-2.3)/0.9)))/2};
\addplot [thick, green!60!black, smooth, domain=3.2:8.8] {2.5};
\addplot [thick, green!60!black, smooth, domain=8.8:9.7] {2.5 * (1 + cos(deg(pi*(x-8.8)/0.9)))/2};
\addplot [thick, green!60!black, smooth, domain=9.7:12.5] {0};
\node[anchor=west, green!60!black] at (axis cs:10.2,2.5) {\small$gh$};

\addplot [thick, blue, smooth, domain=-1.5:2.3] {0};
\addplot [thick, blue, smooth, domain=2.3:3.2] {2.0 * (1 - cos(deg(pi*(x-2.3)/0.9)))/2};
\addplot [thick, blue, smooth, domain=3.2:8.8] {2.0};
\addplot [thick, blue, smooth, domain=8.8:9.7] {2.0 * (1 + cos(deg(pi*(x-8.8)/0.9)))/2};
\addplot [thick, blue, smooth, domain=9.7:12.5] {0};
\node[anchor=west, blue] at (axis cs:10.2,2.0) {\small$h$};

\addplot [thick, red, smooth, domain=-1.5:2.3] {0};
\addplot [thick, red, smooth, domain=2.3:3.2] {1.5 * (1 - cos(deg(pi*(x-2.3)/0.9)))/2};
\addplot [thick, red, smooth, domain=3.2:8.8] {1.5};
\addplot [thick, red, smooth, domain=8.8:9.7] {1.5 * (1 + cos(deg(pi*(x-8.8)/0.9)))/2};
\addplot [thick, red, smooth, domain=9.7:12.5] {0};
\node[anchor=west, red] at (axis cs:10.2,1.5) {\small$g$};

\draw[black, thin, ->] (-1.5,0) -- (12.5,0);

\node[below] at (axis cs:3.25,0) {\small$\lambda^L_{g,h} \in \mathcal{A}^L$};
\node[below] at (axis cs:9.75,0) {\small$\lambda^R_{g,h} \in \mathcal{A}^R$};

\end{axis}
\end{tikzpicture}
\caption{Unitary fusion operators at the vicinities of boundary of supports.
\label{fig:fusionop}}
\end{figure}
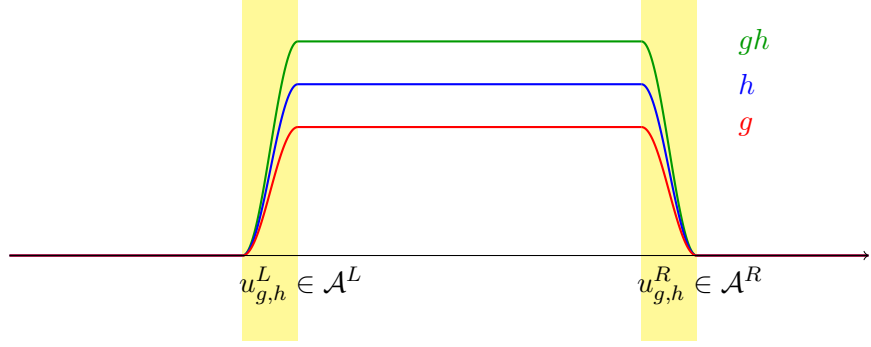

For definiteness, take $W$ to be a finite segment, see Fig.~\ref{fig:fusionop}.
We expect $U_g^W$, $U_h^W$ and $U_{gh}^W$ to be well-defined local operators  on $W$.
Then $u_{g,h}$ can be defined via \begin{equation}
U_g^W U_h^W = u_{g,h}^W U_{gh}^W. \label{XXX}
\end{equation}
$u_{g,h}^W$ should be a product of unitaries $u^L_{g,h}\in \cA^L$ and $u^R_{g,h}\in \cA^R$ localized 
on the left and the right boundaries of $W$, where $\cA^{L,R}$ are the algebras of operators
localized there.
We therefore have \begin{equation}
u_{g,h}^W = u^L_{g,h} u^R_{g,h}. \label{ZZZ}
\end{equation}
Note that this defines $u^{L,R}_{g,h}$ only up to a phase.

Now, let us define the action of $U_g$ on local operators via \begin{equation}
   O \longmapsto \rho_g(O) := U_g^W O (U_g^W)^{-1} \,.
   \label{YYY}
\end{equation}
From \eqref{XXX}, \eqref{ZZZ} and \eqref{YYY}, we have \begin{equation}
\rho_g \rho_h (O) =  \lambda_{g,h}^L \lambda_{g,h}^R \rho_{gh}(O) (\lambda_{g,h}^R)^{-1} (\lambda_{g,h}^L)^{-1} \, .
\label{WWW}
\end{equation}
Let us now suppose $O=O^L\in \cA^L$.
Then, from the commutativity of operators supported on disjoint regions,
we have 
    \begin{equation}  
\rho_g(\rho_h(O^L))=u^L_{g,h}(\rho_{gh}(O^L)) (u^L_{g,h})^{-1} \,.
\end{equation}

At this point onwards, we only need the left boundary of the region $W$
and the algebra of operators $\cA^L$ localized close to this boundary.\footnote{%
This also means that we can consider $W$ to be a semi-infinite half-line $x>0$.
If we do so, however, it is to be noted that the unitary $U_g^W$ is \emph{not} well-defined, 
although the automorphism $\rho_g \colon \cA^L \to \cA^L$ \emph{is} well-defined.
Assuming the existence of $U_g^W$ in this case runs into the following problem.
Namely, we would then have $U^W_g U^W_h= u_{g,h}^L  U^W_{gh}$.
Defining $u_{g,h}^L :=U^W_g U^W_h (U_{gh}^W)^{-1}$,
we can check that $\rho_g (u_{h,k}) u_{g,h k} = u_{g, h} u_{gh, k} = U^W_g U^W_h U^W_k (U^W_{ghk})^{-1}$
by an explicit computation.
This means that  $e^{2\pi i\alpha(g,h,k)}=1$ in \eqref{eq:Bosonkeyequality}
and there is no anomaly.
}
In the rest of this subsection, we will drop the superscript $L$, 
as every operator will be assumed to be from $\cA^L$.
We then have
\begin{equation}  \label{eq:localunitaryrepeat}
\rho_g(\rho_h(O))=u_{g,h}(\rho_{gh}(O)) (u_{g,h})^{-1} \,,
\end{equation}
which was \eqref{eq:localunitary} in the introduction.

Let us compute $\rho_g\rho_h\rho_k(O)$ in two ways.
First, we have \begin{equation}
\rho_g(\rho_h\rho_k)(O) =
\rho_g (u_{h,k} \rho_{hk}(O)u_{h,k}^{-1})
= [\rho_g(u_{h,k}) u_{g,hk}] \rho_{ghk}(O)[\rho_g(u_{h,k}) u_{g,hk}]^{-1}.
\end{equation}
Second, we have \begin{equation}
(\rho_g\rho_h)(\rho_k(O)) = u_{g,h}\rho_{gh}(\rho_k(O))u_{g,h}^{-1}
=[u_{g,h} u_{gh,k}] \rho_{ghk}(O) [u_{g,h} u_{gh,k}]^{-1}.
\end{equation}
Combined, we find that the combination \begin{equation}
[u_{g,h} u_{gh,k}]^{-1}  [\rho_g(u_{h,k}) u_{g,hk}] \in \cA \label{comb}
\end{equation} commutes with $\rho_{ghk}(O)$ for arbitrary $O\in \cA$.
As $\rho_{ghk}$ is an automorphism,
this means that the combination \eqref{comb} commutes with arbitrary elements in $\cA$.
We expect that the local algebra $\cA$ is such that the only element in $\cA$ 
which commutes with every element in $\cA$ is a scalar.\footnote{%
Such an operator algebra is called a \emph{factor}.
It is a standard assumption in algebraic quantum field theory that the algebra of local operators is a factor.
}
This means that there is a phase factor $\alpha(g,h,k)$ such that 
\begin{equation} \label{eq:Bosonkeyequality}
    \rho_g (u_{h,k}) u_{g,h k} = e^{2\pi i\alpha(g,h,k)} u_{g, h} u_{gh, k}\,.
\end{equation}

Next, one can show that $\alpha(g,h,k)$ is indeed a 3-cocycle~\cite{Else:2014vma} and its cohomology class $[\alpha]$ belongs to the group $H^3(BG, \bR/\bZ)$.\footnote{The Else–Nayak method for continuous group $U(1)$ was studied in~\cite{Kawagoe:2021gqi}, where a 3-cocycle (Eq.~(7.16)) was derived for bosonic theories.} We provide the computation using our notation in Appendix~\ref{subsec:cocycleconditionboson}.

As the fusion operator $u$ is defined only up to a phase, one has the freedom of introducing $\tau \in C^2(BG,\bR/\bZ)$ such that 
$\tilde u_{g,h} := e^{2\pi i\tau(g,h)} \, u_{g,h} $ and obtain a new phase factor $\tilde \alpha(g, h, k)$ through 
\begin{equation}
    \rho_g (\tilde u_{h,k}) \tilde u_{g,h k} = e^{2\pi i\tilde \alpha(g,h,k)} \tilde u_{g, h} \tilde u_{gh, k}\,.
\end{equation}
Inserting the relation between $u$ and $\tilde u$, then comparing with \eqref{eq:Bosonkeyequality}, we find
\begin{equation} \label{eq:bosonicequivalence}
\begin{aligned}
 \tilde \alpha(g,h,k) &= \alpha(g,h,k) + \tau(h,k)+\tau(g,hk)-\tau(g,h)-\tau(gh,k) \\
 &= \alpha(g,h,k)+ \delta \tau(g,h,k)\, \pmod \bZ,
\end{aligned}
\end{equation}
which means $[\tilde \alpha] = [\alpha]$ as  cohomology classes.

\subsection{Anomaly of finite group \texorpdfstring{$G$}{G} in 2d fermionic theories}
\label{sec:2dfer}

\paragraph{The data.}
When the theory is fermionic, the anomaly of finite group $G$ has a more intricate feature.
As was uncovered first in the condensed matter literature in \cite{Gu:2012ib}
and later studied from a mathematical perspective in \cite{Brumfiel:2016vpy},
 the anomaly consists of three data \begin{equation}
(\mu,\nu,\alpha)\in C^1(BG;\bZ_2)\times C^2(BG;\bZ_2)\times C^3(BG;\bR/\bZ),
\end{equation} where $C^d(BG;A)$ is the set of $A$-valued cochains of degree $d$, 
with the condition that \begin{equation}
\delta\mu=0,\quad
\delta\nu=0,\quad
\delta\alpha=\frac12 \nu^2.
\label{ferm-const}
\end{equation}
Therefore $\mu$ and $\nu$ are cocycles, while $\alpha$ is not a cocycle in general.
In this paper we consider the case $\mu=0$,
as nonzero $\mu$ never arises as a subgroup of $U(1)$.
Let us  review the Hamiltonian understanding of these data given in \cite{Else:2014vma}. 
As the derivation is somewhat long, we summarize here what we will find.
\begin{itemize}
\item For each $g\in G$, there is an automorphism $O\mapsto \rho_g(O)$ 
of the local operator algebra $\cA$.
\item For each pair $g,h\in G$, there is a unitary $u_{g,h}\in \cA$ such that \begin{equation}
\rho_g(\rho_h(O))=(-1)^{\nu(g,h)|O|}u_{g,h}\rho_{gh}(O)u_{g,h}^{-1},
\end{equation}
where $|O|$ is the fermion parity of the operator $O$
and $\nu(g,h)=|u_{g,h}|$.
\item Finally, we have phases $\alpha(g,h,k)$ such that \begin{equation}
   \rho_g (\lambda_{h,k}) \lambda_{g,hk} =
   e^{2\pi i\alpha(g,h,k)} 
   \lambda_{g,h} \lambda_{gh,k}\,.
\end{equation}
\item The data $\nu$ and $\alpha$ satisfy the constraints \eqref{ferm-const}.
\end{itemize}

\paragraph{The derivation.}
Our derivation starts in the same way as in the bosonic case,
by considering the operator $U_g^W$ 
for the symmetry operation $g$ on a finite segment $W$,
see Fig.~\ref{fig:fusionop}.
We have unitary operators $u_{g,h}^W$ via the relation \begin{equation}
U_g^W U_h^W = u_{g,h}^W U_{gh}^W, \label{eq:bosh}
\end{equation}
and  $u_{g,h}^W$  decomposes as \begin{equation}
u_{g,h}^W = u_{g,h}^L u_{g,h}^R,
\end{equation}
where $u_{g,h}^L\in \cA^L$ and $u_{g,h}^R\in \mathcal{A}^R$.
Then we have
\begin{equation}
\rho_g \rho_h (O) =  \lambda_{g,h}^L \lambda_{g,h}^R \rho_{gh}(O) (\lambda_{g,h}^R)^{-1} (\lambda_{g,h}^L)^{-1} \, \label{eq:foo}
\end{equation}
from \eqref{eq:bosh}, as before.

The crucial difference in the fermionic case is the following.
In general, operators $O^L\in \cA^L$ and $O^R\in \cA^R$  should either commute or anti-commute, \begin{equation}
O^L O^R = (-1)^{|O^L| |O^R|}O^R O^L,
\end{equation} 
where $|O|=0,1$ is its fermion parity.
Note that \emph{anti-commutation is impossible in a bosonic theory.} 

We expect $U_g^W$ for a finite segment $W$ to be bosonic,
as it can be continuously connected to identity by shrinking $W$.
Then $u_{g,h}^W$ should also be bosonic.
From this we see $|u_{g,h}^L|=|u_{g,h}^R|$.
Another consequence of $U_g^W$ being bosonic is that $|\rho_g(O)|=|O|$.

Now, set  $O=O^L\in \cA^L$ in \eqref{eq:foo}
and use \begin{equation}
\lambda_{g,h}^R O^L = (-1)^{|O^L||\lambda_{g,h}^R|} O^L \lambda_{g,h}^R.
\end{equation} 
We then have \begin{equation} 
\rho_g\rho_h(O^L) = (-1)^{|O^L||\lambda_{g,h}^R| }  \lambda_{g,h}^L \rho_{gh}(O^L)(\lambda_{g,h}^L)^{-1} \,.
\end{equation}
Using $|\lambda_{g,h}^R|=|\lambda_{g,h}^L|$,
we  conclude
\begin{equation} \label{eq:fermionfusion}
\rho_g\rho_h(O^L) = (-1)^{|O^L||\lambda_{g,h}^L| }  \lambda_{g,h}^L \rho_{gh}(O^L)(\lambda_{g,h}^L)^{-1} \,. 
\end{equation}
This is similar to \eqref{eq:localunitaryrepeat} in the bosonic case,
but has a sign factor.
From now on, we drop the superscript $L$ for brevity.

Let us now introduce a $\mathbb{Z}_2$-valued 2-cochain $\nu$ as
\begin{equation}
    \nu(g,h) :=|\lambda_{g,h}|.
\end{equation}
As before, we  compute $\rho_g \rho_h \rho_k(O)$ in two ways,
one as $   \rho_g (\rho_h \rho_k) (O) $ and another as  $(\rho_g \rho_h) \rho_k (O) $.
We find that \begin{equation}
\rho_g(\rho_h\rho_k)(O)=
   (-1)^{(\nu(h,k)+\nu(g,hk)) |O|}[\rho_g (\lambda_{h,k}) \lambda_{g,hk} ]
   \rho_{ghk}(O) 
   [\rho_g (\lambda_{h,k}) \lambda_{g,hk} ]^{-1}
   \label{AAA}
\end{equation} and \begin{equation}
(\rho_g\rho_h)(\rho_k(O))=
    (-1)^{(\nu(g,h)+\nu(gh,k))|O|}[\lambda_{g,h} \lambda_{gh,k}]
    \rho_{ghk}(O)
	[\lambda_{g,h} \lambda_{gh,k}]^{-1}.
\label{BBB}
\end{equation}

To go further, we need some reasonable assumptions on the structure of $\cA$.
We assume the following:
\begin{itemize}
\item It is graded, $\cA=\cA_0\oplus \cA_1$, where $\cA_0$ is bosonic and $\cA_1$ is fermionic.
\item It is \emph{not guaranteed} that there is a local version of the fermionic parity operator\footnote{%
Note that the global fermionic parity operator $(-1)^F$ is not necessarily an element of $\cA$.
We use the lower case $f$ for $(-1)^f$ to emphasize this point.
} 
$(-1)^f\in \cA$  such that $(-1)^f$ commutes with $\cA_0$ and anticommutes with $\cA_1$.
\item If an operator in $\cA$  commutes with every operator in $\cA_0$,
it is either a scalar, or a linear combination of a scalar and $(-1)^f$ if $(-1)^f$ exists.
\item If an operator in $\cA$  commutes with every operator in $\cA$,
it is a scalar. (This is a consequence of the last property.)
\end{itemize}
These properties are satisfied for example in a chain of Majorana fermion operators $\psi_i$.
In that case, $(-1)^f$ exists on the local operator algebra for an even number of sites,
but it does not exist on the local operator algebra for an odd number of sites.

Let us extract a consequence from the equality of \eqref{AAA} and \eqref{BBB}.
As $\rho_{ghk}(O)$ ranges over all $\cA_0$ when $O$ ranges over all elements of $\cA_0$,
we find that  \begin{equation}
[\lambda_{g,h} \lambda_{gh,k}]^{-1}
[\rho_g (\lambda_{h,k}) \lambda_{g,hk} ]
\label{CCC}
\end{equation} commutes with all elements of $\cA_0$.

At this point we separate the two cases, depending on whether $(-1)^f$ exists or not.
\begin{itemize}
\item
When $(-1)^f$ does not exist, the combination \eqref{CCC} should be a scalar, and therefore there should be 
a phase $\alpha(g,h,k)$ such that 
\begin{equation} 
   \rho_g (\lambda_{h,k}) \lambda_{g,hk} =
   e^{2\pi i\alpha(g,h,k)} 
   \lambda_{g,h} \lambda_{gh,k}\,. 
\end{equation}
Comparing the fermion parity of both sides, we have \begin{equation}
\nu(h,k) + \nu(g,hk)=\nu(gh,k)+\nu(g,h).
\end{equation}

\item
When $(-1)^f$ does exist, $(-1)^f\in \cA_0$ commutes with the combination \eqref{CCC}.
This means that \begin{equation}
\nu(h,k) + \nu(g,hk)=\nu(gh,k)+\nu(g,h).
\end{equation}
This implies that the sign prefactors in both \eqref{AAA} and \eqref{BBB}
are the same, and therefore the combination \eqref{CCC}
not only commutes with $\cA_0$ but with the entirety of $\cA$.
Therefore, the combination \eqref{CCC} is a scalar, and there should be a phase $\alpha(g,h,k)$ such that \begin{equation}
   \rho_g (\lambda_{h,k}) \lambda_{g,hk} =
   e^{2\pi i\alpha(g,h,k)} 
   \lambda_{g,h} \lambda_{gh,k}\,. 
\end{equation}
\end{itemize}

Either way, we have found that
\begin{equation}
   0 = \nu(h, k) - \nu(gh,k) + \nu(g,hk) - \nu(g,h) = \delta \nu(g,h,k)\,,
\end{equation}
showing that $\nu$ defines a 2-cocycle,
and that 
 there are phases $\alpha(g,h,k)$ such that 
\begin{equation} \label{eq:keyequality}
   \rho_g (\lambda_{h,k}) \lambda_{g,hk} =
   e^{2\pi i\alpha(g,h,k)} 
   \lambda_{g,h} \lambda_{gh,k}\,,
\end{equation}
    just as in  the bosonic version \eqref{eq:Bosonkeyequality}. 
    It is then straightforward to show that $\delta\alpha=\frac12 \nu^2$,
whose derivation we provide in Appendix~\ref{subsec:constraintfermion}.

\paragraph{The equivalence relation.} We have already seen in the bosonic case \eqref{eq:bosonicequivalence} that the phase ambiguity of the fusion operator translates into different choices of representatives of the bosonic anomaly cohomology class. 
An analogous argument based on the equation \eqref{eq:keyequality} shows that our 3-cochain $\alpha$ in the fermionic theory also admits the equivalence
\begin{equation} \label{eq:fermionalphaequivalence}
   \tilde \alpha(g,h,k) = \alpha(g,h,k) + \delta \tau(g,h,k)
\end{equation} 
as we can redefine
\begin{equation}
    \tilde\lambda_{g, h} = e^{2\pi i\tau(g,h)} \lambda_{g, h}
\end{equation} for $\tau(g,h) \in C^2(BG,\bR/\bZ)$.

However, this does not exhaust all the equivalent representations of $(\nu, \alpha)$. 
Recall that we defined $U_g^W$ as the action of the symmetry operation $g$ on a finite segment $W$.
Then nothing stops us from redefining
\begin{equation} \label{eq:bosonunitarymodi}
    \tilde U_g^W := \Sigma_g^W U_g^W \,,
\end{equation}
where $\Sigma_g^W$ is a unitary operator supported at the neighborhood of the boundary $\partial W$ of 
the segment $W$. 
We have \begin{equation}
\Sigma_g^W  = \Sigma_g^L \Sigma_g^R,
\end{equation} where $\Sigma_g^L\in \cA^L$ and $\Sigma_g^R\in \cA^R$,
and 
\begin{equation} \label{eq:fermionunitaryconj}
       \tilde\rho_g (O) =  \Sigma_g^L \Sigma_g^R \rho_g(O) (\Sigma_g^R)^{-1} (\Sigma_g^L)^{-1} \,.
\end{equation}
As argued before, we want $U_g$ and $\tilde U_g$ for finite segments to be bosonic.
This leads to
\begin{equation}
    |\Sigma^R_g| = |\Sigma^L_g| =: \xi(g) 
\end{equation}
with $\xi \in C^1(BG,\mathbb{Z}_2)$. 

Now suppose $O= O^L \in \cA^L$ in \eqref{eq:fermionunitaryconj}. 
We can use the graded commutativity to get 
\begin{equation} \label{eq:fermionicnewconj}
       \tilde\rho_g (O^L) = (-1)^{\xi(g) \,|O^L|} \Sigma_g^L \ \rho_g(O^L)  (\Sigma_g^L)^{-1} \,.
\end{equation}
We define the new fusion operator $\tilde \lambda^L_{g,h} \in \cA^L$ by 
\begin{equation} \label{eq:fermionicnewfusion}
\tilde\rho_g \tilde\rho_h(O^L) = (-1)^{|O^L|| \tilde\lambda_{g,h}^L| } \tilde \lambda_{g,h}^L \tilde\rho_{gh}(O^L)(\tilde\lambda_{g,h}^L)^{-1} \,.
\end{equation}
Using the equation \eqref{eq:fermionicnewconj} of $\tilde \rho_g$ and the identity  \eqref{eq:fermionfusion}, we derive from \eqref{eq:fermionicnewfusion} that 
\begin{align}
\tilde \lambda^L_{g,h} &=  \Sigma_g^L \rho_g(\Sigma_h^L) \lambda^L_{g,h}  (\Sigma_{gh}^L)^{-1}, \label{eq:utilde}\\
    \tilde \nu(g,h) := |\tilde\lambda_{g,h}^L|  &= \nu(g,h) + \xi(g) + \xi(h) - \xi(gh) =  \nu(g,h) + \delta \xi(g,h). \label{eq:nuequivalence}
\end{align}
We see that $\tilde \nu$ and $\nu$ differ by a coboundary term controlled by the fermion parity $\xi$ of $\Sigma^L$.

 The next step is to find the expression of $\tilde \alpha$ specified by 
\begin{equation}  
   \tilde\rho_g (\tilde\lambda_{h,k}^L) \tilde\lambda_{g,hk}^L = e^{2\pi i\tilde\alpha(g,h,k)} \tilde\lambda_{g,h}^L \tilde\lambda_{gh,k}^L\,.
\end{equation}
A lengthy but straightforward computation (see Appendix~\ref{subsec:equivalencefermion}) shows that 
\begin{equation}
    \tilde\alpha(g,h,k) =\frac12\left( \xi(g) \left( \nu(h,k) + \xi(h) + \xi(k) - \xi(h k) \right) + \xi(k) \nu(g,h) \right) + \alpha(g,h,k)
\end{equation}
modulo $\bZ$.

We include the relation \eqref{eq:fermionalphaequivalence} and find the following equivalence relation between fermionic anomaly cochains $(\nu,\alpha) \in C^2(BG,\mathbb{Z}_2) \times C^3(BG,\bR/\bZ)$:
\begin{equation}
\begin{split}
   \nu(g,h) &\sim \nu(g,h) + \delta \xi(g,h),   \\
   \alpha(g,h,k) &\sim \frac12 \left(\xi(g)  (\nu(h,k)+\delta\xi(h,k))  + \xi(k) \nu(g,h) \right)+ \alpha(g,h,k) + \delta \tau(g,h,k) \,,
\end{split}
\end{equation}
for $(\xi,\tau) \in C^1(BG,\mathbb{Z}_2) \times C^2(BG,\bR/\bZ)$.

\paragraph{The addition formula.}
We note that the group operation
 for two anomaly data $(\nu,\alpha)$ and $(\nu',\alpha')$ is 
\begin{equation}
    (\nu,\alpha) \circ (\nu',\alpha') = (\nu + \nu', \frac12 \nu \cup_1 \nu' +  \alpha + \alpha') \,,
\end{equation}
where $\cup_1 : C^2(BG,\mathbb{Z}_2) \times C^2(BG,\mathbb{Z}_2) \rightarrow C^3(BG,\mathbb{Z}_2)$ is the cup-1 product~\cite{Brumfiel:2016vpy}, given concretely as
\begin{equation}
    (\nu \cup_1 \nu') (g,h,k) = \nu(g,h k) \nu'(h,k) + \nu(gh,k) \nu'(g,h) \,.
\end{equation}
This addition formula is derived in Appendix~\ref{subsec:supertensor}. 

\subsection{Anomaly of \texorpdfstring{$\bZ_n\subset U(1)$}{Zn in U(1)}}
\label{sec:embedding}

Our detailed understanding of the projective phase of position-dependent $U(1)$ transformations,
discussed in Sec.~\ref{sec:2dU(1)},
can be used to extract the data $(\nu,\alpha)$ describing the anomaly of the $\bZ_n$ subgroup of $U(1)$
in a very explicit fashion, through the implementation of the procedure given in the last subsection.
Let us carry it out. 

We specify first the profile function $\kappa(x)$ of the generator $\exp 2\pi i \frac{1}{n} \in \bZ_n \subset U(1)$
given by
\begin{equation}
\kappa(x)=\left\{\begin{array}{ll}
0 & (0 \leq x < a_1)  \\
\text{interpolate} & (a_1 \leq x < a_2)\\
\frac{1}{n} & (a_2 \leq x \leq a_3) \\
\text{interpolate}  & (a_3 < x \leq a_4) \\
0  & (a_4 < x \leq 2\pi)
\end{array}\right. \,,
\end{equation}
where $0 <a_1 < a_2 < a_3 < a_4 < 2\pi$.
For $a \in \{0,1,2,\ldots,n-1 \}$, we denote the labeling of group element $\exp 2\pi i \frac{a}{n}$ simply by $a$ and its profile is $a \kappa$. To display the fusion operators, we also introduce
\begin{equation}
f_L(x)=\left\{\begin{array}{ll}
0 & (0 \leq x < a_1)  \\
n \kappa(x) & (a_1 \leq x < a_2)\\
1 & (a_2 \leq x \leq 2\pi)
\end{array}\right., 
\quad \quad
f_R(x)=\left\{\begin{array}{ll}
0 & (0 \leq x < a_3)  \\
n \kappa(x) - 1& (a_3 \leq x < a_4)\\
-1 & (a_4 \leq x \leq 2\pi)
\end{array}\right..
\end{equation}
See Fig.~\ref{fig:profiles} for an illustration.

\begin{figure}
\[
\vcenter{\hbox{\includegraphics[width=.75\textwidth]{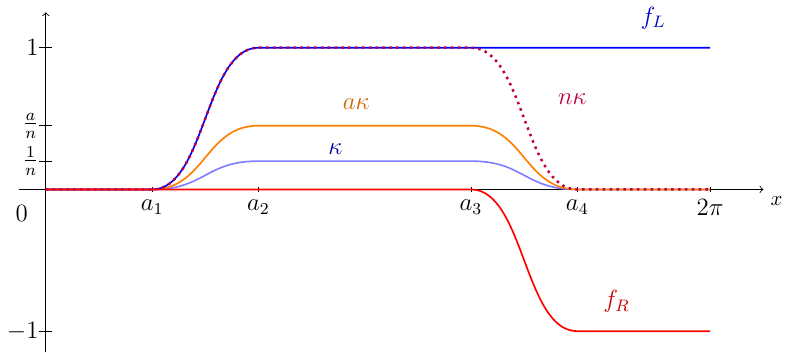}}}
\]
\caption{
The profile and auxiliary functions used in the analysis of the anomaly of $\bZ_n\subset U(1)$.
\label{fig:profiles}}
\end{figure}

Now define the carry\footnote{%
Note that this carry operation is an element in $H^2(\bZ_n;\bZ_n)$ for the extension 
$0\to \bZ_n \to \bZ_{n^2}\to \bZ_n\to 0$.
We all learn the case $n=10$ in the elementary school,
meaning that we all learned (a very basic case of) group cohomology there.
For more, see e.g.~\cite{IsaksenElementary}.
} $p(b,c)$ for $b,c \in \{0,1,2,\ldots,n-1 \}$ as
\begin{align}
p(b,c)=\left\{\begin{array}{ll}
0 & (b+c < n)  \\
1 & (b+c \geq n)
\end{array}\right.\,.
\end{align}
We also let $\overline{m}$  denote the residue of an integer $m$ modulo $n$.
When comparing  $\rho_b \rho_c (O)$ and $\rho_{\overline{b+c}} (O)$, 
nontrivial fusion operators appear if and only if $b+c \geq n$ and we then conclude that 
\begin{equation}
    \lambda^L_{b,c} = U(f_L)^{p(b,c)}\,, \quad \quad \lambda^R_{b,c} = U(f_R)^{p(b,c)} \,.
\end{equation}
We can compute 
\begin{equation}
    \begin{aligned}
         \rho_a(\lambda^L_{b,c}) = U(a\kappa)\left(U(f_L)^{p(b,c)}\right) U(a\kappa)^{-1}
         &= \left(\exp 2\pi i \,\com(a\kappa, f_L) \right)^{p(b,c)}\,U(f_L)^{p(b,c)} \\
         &= \left(\exp 2\pi i \frac{k}{2} \frac{a}{n}  p(b,c) \right)\,U(f_L)^{p(b,c)}\,.
\end{aligned}
\end{equation}
Putting the result back to the identity \eqref{eq:Bosonkeyequality}, we get 
\begin{equation}
    \left(\exp 2\pi i \frac{k}{2} \frac{a}{n}  p(b,c) \right)\, U(f_L)^{p(b,c)} U(f_L)^{p(a,\overline{b+c})}= 
    e^{2\pi i\alpha(a, b,c)} U(f_L)^{p(a,b)} U(f_L)^{p(\overline{a+b},c)} \,.
\end{equation}
 As the $p$'s are just $0$ or $1$, examining the above identity case by case, we get 
\begin{equation}
    \alpha(a, b,c) =  \frac{k}{2} \frac{a}{n}  p(b,c) 
    \,.
    \label{alpha}
\end{equation}
For the fermion parity we take
$(-1)^F = U(\frac{1}{2})$ with the constant map $g=\frac{1}{2}$,
see our discussion at the end of Sec.~\ref{sec:setup}.
Then   from
\begin{equation}
    (-1)^F \lambda^L_{b,c}(-1)^F = (-1)^{\nu(b,c)} \, \lambda^L_{b,c}
\end{equation}
we see 
\begin{equation}
  \nu(b,c)= k \,p(b,c)\,.
\end{equation}
It is straightforward to verify that 
\begin{equation}
    (\delta \alpha)(a, b, c, d) = \frac{k}{2} p(a,b) p(c,d)\,.
\end{equation}
This recovers 
\begin{equation}
    (\delta \alpha)(a, b, c, d) = \frac12 \nu(a,b) \nu(c,d) \,
\end{equation}
for odd $k$, while we have \begin{equation}
    (\delta \alpha)(a, b, c, d) = 0
\end{equation} for even $k$,
recovering the standard cocycle condition of the bosonic theory.
Indeed, Eq.~\eqref{alpha} for $k=2$
is the standard expression for the cocycle representative of the generator of \begin{equation}
H^3(\bZ_n;U(1))\simeq \bZ_n.
\end{equation}

\section{\texorpdfstring{$SU(2)$}{SU(2)} symmetry in four dimensions}
\label{sec:4dSU(2)}

In this last section\footnote{%
In this section, we use the viewpoint A in the footnote \ref{foot:fundamental} throughout,
in order to connect to the theory of invertible phases and anomalies.
}, we provide yet another method to obtain the
commutator map \eqref{eq:full-result} of the position-dependent $U(1)$ transformations
from the three-dimensional invertible phase encoding the anomaly of the two-dimensional boundary theory.
This alternative derivation can be readily generalized to the case of position-dependent $SU(2)$ transformations
of four-dimensional theories,
and we will find that two position-dependent $SU(2)$ transformations of winding number one
localized on two disjoint regions of the spatial $\bR^3$
anticommute with each other when the $SU(2)$ global symmetry has the Witten anomaly.

After giving reviews on anomalies of one-dimensional systems in Sec.~\ref{sec:1dim},
we will discuss $U(1)$ symmetry in two dimensions in Sec.~\ref{sec:XXX},
and $SU(2)$ symmetry in four dimensions in Sec.~\ref{sec:YYY}.
If the reader prefers, the last subsection \ref{sec:YYY} on $SU(2)$ in four dimensions
can be read without reading Sec.~\ref{sec:XXX} on $U(1)$ in two dimensions.

\subsection{Anomalies of one-dimensional systems}
\label{sec:1dim}
For our purposes, we need to review the anomalies of one-dimensional bosonic and fermionic systems,
and how they are encoded by the two-dimensional bulk invertible phase.

Consider a bosonic one-dimensional theory with symmetry $G$. 
This is simply a quantum mechanical system, where the group $G$ acts on the Hilbert space $\cH$.
It is non-anomalous when $\cH$ is a genuine representation of $G$,
and anomalous when $\cH$ is a projective representation of $G$.
It is well-known that a projective phase is classified in terms of $[\beta]\in H^2(BG;\bR/\bZ)$,
where the underlying 2-cocycle $\beta$ appears as the projective phase \begin{equation}
U_g U_h = e^{2\pi i\beta(g,h)} U_{gh}.
\label{eq:gen-proj}
\end{equation}

The same $[\beta]$ determines the bulk two-dimensional invertible theory,
whose action\footnote{%
In this section we somewhat abuse the terminology and represent
the partition function $Z$ of invertible phases as $Z=e^{2\pi i S}$ and 
call $S$ the action.
We consider $S$ to be valued in $\bR/\bZ$,
and any equality below involving $S$ should be regarded modulo $1$.
}  for a $G$-bundle on a manifold $M_2$ is given by \begin{equation}
\int_{M_2} f^*(\beta) \in \bR/\bZ,
\end{equation}
where $f:M_2\to BG$ is the classifying map for the $G$-bundle.
This action was first considered by Dijkgraaf and Witten in \cite{Dijkgraaf:1989pz},
although their motivation was to perform the finite path integral over all possible $G$-bundles.
Here we keep $G$-bundles as background fields.

More concretely, for $M_2=T^2$ with the $G$-bundle given by the commuting holonomies $g,h\in G$ along two 1-cycles,
the $U(1)$ phase is given by \begin{equation}
\com(g,h):=\int_{T^2} f^*(\beta)=\beta(g,h)-\beta(h,g).
\end{equation}  This is the discrete torsion first introduced by Vafa in \cite{Vafa:1986wx} whose relation to 
the group cohomology was noticed in \cite{Douglas:1998xa}.
The more immediate concern for us is that this is exactly the commutator phase appearing in \begin{equation}
U_g U_h = e^{2\pi i \com(g,h)}U_h U_g.
\end{equation}

We would now like to extend the discussion to the fermionic theories.
In this case, the anomaly is known to be characterized by the data \begin{equation}
([\mu],[\beta]) \in H^1(BG;\bZ_2) \times H^2(BG;\bR/\bZ),
\end{equation} which is a simpler analogue of the data for the anomalies of two-dimensional fermionic theories
we discussed in Sec.~\ref{sec:2dfer}.
In the Hamiltonian language, what distinguishes a fermionic theory is the existence of the fermionic parity operator $(-1)^F$
acting on the Hilbert space $\cH$.
The data $\beta$ controls the projective phase as in the bosonic case \eqref{eq:gen-proj}, 
while $\mu$ specifies the fermion parity \begin{equation}
U_g (-1)^F= (-1)^{\mu(g)} (-1)^F U_g.
\end{equation}
We note that the addition formula for the anomaly data is not simply the addition of cohomology classes but \begin{equation}
([\mu],[\beta]) + ([\mu'],[\beta'])
= ([\mu+\mu'],[\beta+\beta'+\frac12\mu\mu']),
\label{eq:1d-fer-add-formula}
\end{equation}
where the additional term $\frac12\mu\mu'$ appears 
from the fact that,
to combine the action of $U_g$ on $\cH$ and $U_g'$ on $\cH'$,
we have to use the super tensor product $U_g\hotimes U_g'$ on $\cH\otimes \cH'$,
see Appendix~\ref{subsec:supertensor}.

The same data $([\mu],[\beta])$ determines the bulk two-dimensional fermionic invertible theory,
whose action is \begin{equation}
S[M_2,\sigma, f] = \int_{M_2} f^*(\beta) + \frac12 q_{M_2,\sigma}(f^*(\mu)).
\end{equation} 
Here, we used $\sigma$ to denote the chosen spin structure on $M_2$,
and  $q_{M_2,\sigma}(a)\in \bZ_2$ for $a\in H^1(M_2;\bZ_2)$ is the quadratic refinement 
determined by the  spin structure $\sigma$ on $M_2$, given by  \begin{equation}
q_{M_2,\sigma}(\mu) = \Arf_{M_2}(\sigma+\mu) -\Arf_{M_2}(\sigma)
\end{equation} where $\Arf_{M_2}(\sigma)$ is the Arf invariant\footnote{%
The spin bordism group $\Omega^\text{spin}_2$ is $\bZ_2$,
and the Arf invariant is simply the bordism class of $M_2$ with the specified spin structure.
It is the number of zero modes of the Dirac operator on $M_2$ modulo two.
} of the manifold $M_2$ with the spin structure $\sigma$,
and $\sigma+\mu$ is the spin structure obtained by shifting the spin structure 
$\sigma$ by the $\bZ_2$ gauge field $\mu$.
For more details, see e.g.~\cite{Brumfiel:2016vpy,Kaidi:2019tyf}.
From this point of view, the extra term $\frac12\mu\mu'$ in the addition formula \eqref{eq:1d-fer-add-formula}
is due to the defining property of the quadratic refinement\footnote{%
\label{fn:quad-ref}%
Quadratic refinements serve the following purpose.
Say that we have a pairing $(a,b)$ valued in an Abelian group $\cA$.
\emph{If} it is possible to divide by $2$, the combination  $\frac12 (a,a)$
would satisfy the relation $$
\frac12(a+b,a+b) = \frac12(a,a)+\frac12(b,b) + (a,b).
$$
But when the pairing takes values in $\cA=\bZ_2$ or $U(1)$, we cannot divide by two 
in an unambiguous manner. 
In such a case, a function $q(a)$ satisfying the desired property $$
q(a+b)=q(a)+q(b) + (a,b)
$$ is called a quadratic refinement of the pairing $(a,b)$.
Usually, the data needed to  define  such a quadratic refinement are
more than the data required to define the original pairing.
For example, we need a spin structure to define a quadratic refinement.
}, \begin{equation}
q_{M_2,\sigma}(\mu+\mu')=q_{M_2,\sigma}(\mu)+q_{M_2,\sigma}(\mu')+\int_{M_2} \mu\mu'.
\label{eq:q-qua}
\end{equation}

Let us specialize this in the case of the torus, $M_2=T^2$.
On $T^2$, Arf invariant is nontrivial $1\in \bZ_2$ if and only if the fermion is periodic on all directions.
This allows us to compute $S[M_2,f]$ for $M_2=T^2$ and $f:T^2\to BG$ specified by two holonomies $(g,h)$,
which we denote by $S[T^2,\sigma,(g,h)]$.

Setting $h=e$ on $T^2$ with spin structure $(\NS,\R)$, we find
\begin{equation}
S[T^2,(\NS,\R), (g,e)] = \begin{cases}
0 &( \mu(g)=0),\\
\frac12 & (\mu(g)=1).
\end{cases}
\label{get-nu}
\end{equation}
In contrast, for $T^2$ with $(\NS,\NS)$ spin structure with holonomy $(g,h)$, the value is \begin{equation}
S[T^2,(\NS,\NS),(g,h)]=
\frac12\mu(g)\mu(h) + \com(g,h) .
\label{get-c}
\end{equation}
This consideration allows us to determine $\mu$ and $\com$ from \eqref{get-nu} and \eqref{get-c}
if we know how to evaluate the bulk invertible phase on the left hand side.

\subsection{Position-dependent \texorpdfstring{$U(1)$}{U(1)} transformation in two dimensions} 
\label{sec:XXX}

\subsubsection{The zero-winding-number sector}

We would like to set $G=LU(1)$ in the analysis above to read off $\com(g,h)$
for two-dimensional position-dependent $U(1)$ symmetry operations $U(g)$ and $U(h)$.
This can be done by compactifying the spatial direction of the two-dimensional theory on $S^1$
but \emph{not} throwing away the Kaluza-Klein mode,
and just formally regarding it as a one-dimensional theory.
As we will need multiple $S^1$'s, let us write this spatial $S^1$ in the original two-dimensional theory as $S^1_A$;
the temporal direction will be denoted by a subscript $X$.
Then the gauge group $G$ of the resulting one-dimensional theory is the group $LU(1)$ of functions from $S_A^1$ to $U(1)$.
The bulk invertible phase is then the 3d bulk invertible phase 
compactified formally on $S^1_A$,
and the bulk direction will be denoted by a subscript $Y$.

To compute the commutator map between $f:S^1_A\to U(1)$
and $g:S^1_A\to U(1)$, we need to have a $T^2=S^1_X\times S^1_Y$
such that the boundary condition around the direction of $S_X$ is twisted by $f$,
and that of $S_Y$ is twisted by $g$.
This defines a $G=LU(1)$ gauge configuration on $T^2=S^1_X\times S^1_y$, or equivalently, a $U(1)$ gauge configuration on $M_3:=S^1_X \times S^1_Y \times S^1_A$.
Then $\gamma(f,g)$ can be read off by evaluating the bulk invertible phase on this $M_3$.

The 3d invertible phase for $U(1)$ level $k$ is the Abelian Chern-Simons term,
which is often written as \begin{equation}
\exp(-2\pi i \frac {k}{2}\int_{M_3} \frac{A}{2\pi}   \frac{F}{2\pi})
\label{eq:expCS}
\end{equation} 
in physics literature.\footnote{%
The minus sign here is a convention to reproduce \eqref{eq:B0andc0} including the sign.
}
This involves the gauge potential $A$, which is not in general globally well-defined,
and a somewhat subtle analysis is required to deal with such cases.
This will be detailed in the next subsection.
However, when $f$ and $g$ are both in the zero-winding-number sector,
a naive approach suffices, as $A$ can be thought of as a globally-well-defined one-form.
Let us perform this computation here and reproduce $\gamma_0(f,g)$ given in \eqref{eq:B0andc0}.

Let us parameterize $S^1_{X,Y,A}$ by $x,y,t$, all with periodicity $2\pi$.
Having the transformation $f: S^1_A\to U(1)$ around $S^1_X$ means that
the integral of the gauge field $A$ around $S^1_X$ should be equal to $2\pi f$;
recall that $\exp(2\pi if)$ takes values in $U(1)$ in our convention.
Similarly, the integral of $A$ around $S^1_Y$ should be equal to $2\pi g$.
One such $A$ is given by \begin{equation}
A=  f(t) dx +  g(t) dy.
\label{Azerow}
\end{equation}
We simply plug this in to \eqref{eq:expCS} and find that it is given by \begin{equation}
\exp(2\pi i \frac k2 \int_{S^1_A}  (f dg - gdf)),
\end{equation}
which indeed reproduces \eqref{eq:B0andc0}.

\subsubsection{The bulk invertible phase and the quadratic refinement}
\label{sec:sec-4-CSlevel}
We would now like to extend this computation to the sectors with
nonzero winding number.
For this, we need to make the Abelian Chern-Simons term \eqref{eq:expCS}
well-defined in the topologically nontrivial cases.
There are multiple ways to do so. \if0
We first review the one which is probably more common,
but unfortunately is not very useful in our computation.
We then provide another formulation,
which is more abstract but can be readily used for our present purposes.
\fi
We use two such equivalent but differently useful formulations.

\paragraph{Extending $M_3$ to $N_4$.}
One method of making it well-defined is the following.
From the bordism argument, for any $M_3$ with  a $U(1)$ bundle on it,
we can find a four-manifold $N_4$ whose boundary is $M_3$,
with the $U(1)$ bundle on $N_4$ extending that on $M_3$.
When $M_3$ has a chosen spin structure, we can arrange so that $N_4$ is also spin
and that the spin structure of $N_4$ restricts to the given spin structure on $M_3$.
Now, using the Stokes theorem, we try to define as follows: \begin{equation}
\int_{M_3} \frac{A}{2\pi}   \frac{F}{2\pi} \,\text{``}{=}\text{''}\, \int_{N_4}  (\frac{F}{2\pi})^2.
\label{eq:abelianCS}
\end{equation}
However there are multiple such choices of $N_4$ together with $U(1)$ bundle on it.
Say $N'_4$ is another such choice.
Two such tentative definitions would then differ by  \begin{equation}
\int_{N_4}  ( \frac{F}{2\pi}  )^2-
\int_{N'_4}  (\frac{F}{2\pi} )^2
=\int_{N_4 \cup_{M_3} \overline{N'_4}} (\frac{F}{2\pi})^2
=\int_{N_4 \cup_{M_3} \overline{N'_4}} c_1(F)^2.
\end{equation}
In the last equality, we used the fact that  $\frac{F}{2\pi}$ is the differential form representative 
of the first Chern class $c_1(F)$ of the closed manifold $N_4 \cup_{M_3} \overline{N'_4}$.

Since $c_1(F)$ is an integral cohomology class, the right hand side is then an integer.
For this statement we do not need  spin structure.
This makes the definition \eqref{eq:abelianCS} well-defined modulo $\bZ$.
We then find that 
the expression \eqref{eq:expCS} is well-defined  when $k$ is even, without using spin structure.

With spin structure, the intersection form on any four-manifold is known to be even,
guaranteeing that the right hand side is an even integer.
This makes the definition \eqref{eq:abelianCS} well-defined modulo $2\bZ$.
This means that  we need to specify the spin structure to make \eqref{eq:expCS} well-defined 
when $k$ is odd.
The dependence on the spin structure is the characteristic feature of a fermionic theory,
and we just saw that we need to have spin structure and therefore to have fermionic theory to realize odd $k$.
This point was emphasized e.g.~in \cite{Belov:2005ze}.
Our discussion in Sec.~\ref{sec:2dU(1)} was to understand this dependence on the parity on $k$ 
from the Hamiltonian point of view.

\paragraph{Differential cohomology, quadratic refinement, and $\eta$ invariant.}
As another method, we use differential cohomology,
which we already briefly mentioned in Sec.~\ref{sec:sec-2-comments}.
For an introduction of differential cohomology for physicists, see e.g.~\cite{Cordova:2019jnf,Hsieh:2020jpj,GarciaEtxebarria:2024fuk}.

On a general manifold $M_d$ of dimension $d$, 
$\hat H^1(M_d)$ is the space of circle-valued functions $M_d\to S^1$,
$\hat H^2(M_d)$ is the space of $U(1)$ bundles with connections on $M_d$,
and $\hat H^{d+1}(M_d)$ is $\bR/\bZ$.
We also mentioned that there is a graded-commutative product on differential cohomology elements.
We then consider the gauge field $F=dA$ on $M_3$ as specifying an element $\hat A\in\hat H^2(M_3)$.
Taking two elements $\hat A_{1,2}\in \hat H^2(M_3)$,
we have $\hat A_1 \cdot \hat A_2\in \hat H^4(M_3)\simeq \bR/\bZ$.
Let us denote this value  by \begin{equation}
\int_{M_3}  \hat A_1 \cdot \hat A_2 =: (\hat A_1,\hat A_2) \in \bR/\bZ.
\end{equation}
When the $U(1)$ connections $\hat A_{1,2}$ are extended to $N_4$,
this product is known to be given by \begin{equation}
(\hat A_1,\hat A_2)=\int_{N_4} \frac{F_1}{2\pi} \frac{F_2}{2\pi}.\label{eq:A1A2}
\end{equation} In particular, when $\hat A_1=\hat A_2=\hat A$, the right hand side
reduces to the exponent of the Chern-Simons term \eqref{eq:expCS} at $k=2$
we discussed above.
We can then regard $(\hat A,\hat A)$ as a precise, well-defined version 
of the exponent of \eqref{eq:expCS} at $k=2$.\footnote{%
We note that the differential cohomology pairing, not only in this dimension but
more generally in arbitrary dimensions, can be defined 
directly on $M$ without introducing the bounding manifold $\partial N=M$,
by decomposing $M$ into patches. 
One of the simplest versions is the one used in our Sec.~\ref{sec:1st-derivation},
where we considered the pairing of $\hat f,\hat \chi\in \hat H^1(S^1)$,
although $\chi$ was operator-valued there.
}

To define the exponent of \eqref{eq:expCS} at $k=1$, 
we need to use the spin structure somewhere.
For this, recall that the definition \begin{equation}
\frac12 \int_{M_3} \frac{A}{2\pi}   \frac{F}{2\pi} 
=
\frac12\int_{N_4} (\frac{F}{2\pi})^2
=:Q_{M_3,\sigma}(\hat A)
\label{eq:QQ}
\end{equation} is well-defined modulo $\bZ$, in the presence of the spin structure $\sigma$.
Here we introduced the notation $Q_{M_3,\sigma}(\hat A)$ to denote this quantity.

Plugging in $A=A_1+A_2$ to this equation and expanding the right hand side,
we find \begin{align}
Q_{M_3,\sigma}(\hat A_1+\hat A_2)
&=Q_{M_3,\sigma}(\hat A_1)+ Q_{M_3,\sigma}(\hat A_2) + \int_{N_4} \frac{F_1}{2\pi}\frac{F_2}{2\pi}\\
&=Q_{M_3,\sigma}(\hat A_1)+ Q_{M_3,\sigma}(\hat A_2) + (\hat A_1,\hat A_2),
\label{eq:Q-qua}
\end{align}
where we used \eqref{eq:A1A2} here.
This relation identifies the $k=1$ Chern-Simons term, $Q_{M_3,\sigma}(\hat A)$,
as the quadratic refinement\footnote{See footnote~\ref{fn:quad-ref}.} 
of the differential cohomology pairing $(\hat A_1,\hat A_2)$.
Note the resemblance to \eqref{eq:q-qua} in the two-dimensional case,
where $q_{M_2,\sigma}(\mu)$ was a quadratic refinement for the pairing $\int_{M_2} \mu \mu'$.
This equation \eqref{eq:Q-qua} will turn out to be useful in our computation.

The quantity $Q_{M_3,\sigma}(\hat A)$ is also related to $\eta$ invariants.
Given the Dirac operator $\slashed{D}_A$ coupled to the $U(1)$ gauge field $A$
on a spin manifold $M_3$, the eta invariant $\eta(\slashed{D}_A)$ is the regularized 
sum of the signs of its eigenvalues $\lambda_a$: \begin{equation}
\eta(\slashed{D}_A) = \frac12 \text{``}\sum_a\text{''} \ \mathop{\mathrm{sign}} \lambda_a .
\end{equation} 
The Atiyah-Patodi-Singer index theorem says that \begin{equation}
\eta(\slashed{D}_A)=\int_{N_4} \left(\frac12(\frac{F}{2\pi})^2+\frac1{48}\mathop{\mathrm{tr}} (\frac{R}{2\pi})^2\right) \mod 1,
\end{equation} where $R$ is the spacetime curvature.
Comparing this with our definition \eqref{eq:QQ}, we find \begin{equation}
Q_{M_3,\sigma}(\hat A) = \eta(\slashed{D}_A)-\eta(\slashed{D}_{A=0}).
\label{eq:to-eta}
\end{equation}
The right hand side is defined solely in terms of $M_3$, without extending
$M_3$ to $N_4$.

\subsubsection{Computations}
We now want to proceed to the computation of the commutator map.
For this, we need to set up the three-dimensional manifold $M_3$
with a $U(1)$ bundle.
This manifold $M_3$ has a $T^2=S^1_X\times S^1_Y$ for which we apply the consideration of Sec.~\ref{sec:1dim}, 
and an $S^1_A$ on which we formally compactify the bulk theory.
Our $LU(1)$ is then the group of functions from $S^1_A$ to $U(1)$.
We take the periodicity of $S^1_{A,X,Y}$ to be $2\pi$.

We want the holonomy of $LU(1)$ along $S^1_X$ to be given by $f:S^1_A\to U(1)$.
This means that the curvature of the $U(1)$ connection on $S^1_X \times S^1_A$
is given by $\frac{dx}{2\pi}\times df$, where $x$ is the coordinate of $S^1_X$.
In the language of differential cohomology,
$\frac{dx}{2\pi}$ defines an element $\hat T_X \in \hat H^1(S^1_X)$ such that  \begin{equation}
\int_{S^1_X} \hat T_X=1,
\end{equation}
and is a differential lift of the generator of $H^1(S^1_X;\bZ)=\bZ$.
Similarly, we have $\hat f\in \hat H^1(S^1_A)$,
and the $U(1)$ connection on $S^1_X\times S^1_A$ is $\hat f \hat T_X$.
Analogously, we need a $U(1)$ connection on $S^1_Y\times S^1_A$ given by $\hat g\hat T_Y$.

In total, we have the $U(1)$ connection on 
$M_3=T^3 = S^1_A \times S^1_X \times S^1_Y$
given by  \begin{equation}
\hat A=\hat f \hat T_X + \hat  g\hat T_Y.\label{hatA}
\end{equation}
Here, $\hat f,\hat g\in LU(1)=\hat H^1(S^1_A)$ specify the gauge transformations used,
and $\hat T_X \in \hat H^1(S^1_X)$ and $\hat T_Y\in \hat H^1(S^1_Y)$
are differential lifts of the generators of  $H^1(S^1_{X,Y};\bZ)=\bZ$.
Compare this with the gauge field \eqref{Azerow} in the zero winding number sector;
the expression \eqref{hatA} is simply the version of \eqref{Azerow}
applicable in the general, topologically non-trivial cases.

The value of the invertible phase 
depends on the spin structure $\sigma$ on $T^3$, so let us specify it by defining
$a_i$ for each direction $i=A,X,Y$ of $S^1$
to be $0$ or $1$ depending on whether $S^1_i$ is in the NS sector or in the R sector.
Recalling \eqref{eq:expCS}, our objective is to compute 
\begin{equation}
S[T^3; (a_i); (f,g)] := - Q_{T^3,(a_i)}(\hat f\hat T_X+\hat g\hat T_Y).
\end{equation}

We first use the the quadratic refinement property to rewrite\begin{equation}
Q_{T^3,(a_i)}(\hat f\hat T_X+\hat g\hat T_Y)
= Q_{T^3,(a_i)}(\hat f\hat T_X) + Q_{T^3,(a_i)}(\hat g\hat T_Y)+ \int_{T^3} \hat f \hat T_X \hat g \hat T_Y \,.
\label{666}
\end{equation} 
Let us start by evaluating the last term.
A few basic properties of the differential cohomology integration come in handy, so let us list them.
In general, given a product manifold $M\times N$ and a differential cohomology class 
$\hat\omega \in \hat H^p(M\times N)$, we can integrate along $M$, which reduces
the degree by $\dim M$, so we have \begin{equation}
\int_M \hat \omega \in \hat H^{p-\dim M}(N).
\end{equation}
In particular, when $N$ is a point, the integral takes the values in $\hat H^*(pt)$,
which satisfies \begin{equation}
\hat H^d(pt)=\begin{cases}
\bZ & (d=0),\\
\bR/\bZ & (d=1),\\
0 & \text{otherwise}.
\end{cases}
\end{equation}
Finally, given $\hat a\in \hat H^p(M)$ and $\hat b\in \hat H^q(N)$, we have \begin{equation}
\int_{M\times N} \hat a \hat b = \int_M \hat a \int_N \hat b.
\end{equation}

Let us come back to the computations of \eqref{666}. 
The last term can be manipulated  \begin{equation}
\int_{M_3} \hat f \hat T_X \hat g \hat T_Y
=-\int_{S^1_A}\hat f \hat g
\int_{S^1_X} \hat T_X 
\int_{S^1_Y} \hat T_Y
=-\int_{S^1_A}\hat f \hat g=-\tilde \com( f, g),
\end{equation}
using the properties of differential cohomology integration listed above.
The notation $\tilde{\com}$ was introduced in \eqref{eq:gamma-tilde}.
What remains to be done is, then, to evaluate $Q_{T^3,(a_i)}(\hat f\hat T_X)$. 
The computation of $Q_{T^3,(a_i)}(\hat g\hat T_Y)$ is entirely analogous.

\if0
\comment{YT: This paragraph is dubious.}
To do this, we recall  our alternative definition \eqref{eq:QQ} of $Q_{M_3,\sigma}(\hat A)$.
We regard $M_3=T^3=S^1_X\times S^1_Y \times S^1_A$
as $(S^1_A\times S^1_X)\times S^1_Y$,
and use $N_4$ by filling in $S^1_Y$ via a disk $D^2_Y$,
such that $\partial D^2_Y= S^1_Y$.
Then \begin{equation}
Q_{T^3,(a_i)}(\hat f\hat T_X)
= \int_{S^1_A\times S^1_X} \hat f  \hat T_X  \int_{D^2_Y} \frac{F_Y}{2\pi}.
\end{equation}
Here, the $U(1)$ connection on $D^2_Y$ is there
to produce the spin structure on its boundary $S^1_Y$,
and therefore \begin{equation}
 \int_{D^2_Y} \frac{F_Y}{2\pi} = \begin{cases}
 0 & (a_Y=0),\\
 \frac12 & (a_Y=1).
\end{cases}
\label{eq:01}
\end{equation}
The first factor can be evaluated as \begin{equation}
\int_{S^1_A\times S^1_X} \hat f  \hat T_X  
=\int_{S^1_A\times S^1_X} df\frac{dx}{2\pi} = w_f.
\end{equation}
We conclude that \begin{equation}
Q_{T^3}(\hat f\hat T_X)=\frac12 a_Y w_f,
\label{eq:QfT}
\end{equation}
where we remind the reader that $a_Y=0$ or $1$ depending on whether $S^1_Y$ is NS or R.
\comment{YT: to here.}
\fi

We do this by using the relation \eqref{eq:to-eta}.
We  regard our $M_3=T^3$ as $(S^1_A\times S^1_X)\times S^1_Y$.
The gauge field configuration $\hat f \hat T_X$ is defined on $S^1_A\times S^1_X$
and pulled back trivially along $S^1_Y$.
Therefore we can use the product formula of the $\eta$-invariant,
\begin{equation}
(\text{$\eta$ on $M_{2n}\times S^1$}) = 
(\text{index on $M_{2n}$})
\times (\text{$\eta$ on $S^1$}) \,.
\label{eq:eta-product-formula}
\end{equation}
In our case, the first factor is simply the integral of the $U(1)$ field strength,
and is given by  \begin{equation}
\int_{T^2} \hat f \hat T_X=w_f.
\end{equation}
The second factor is $0$ or $1/2$ depending on whether the spin structure around $S^1$ 
is NS or R, respectively.
We conclude that \begin{equation}
Q_{T^3}(\hat f\hat T_X)=\frac12 a_Y w_f,
\label{eq:QfT}
\end{equation}
where we remind the reader that $a_Y=0$ or $1$ depending on whether $S^1_Y$ is NS or R.

We therefore have found \begin{equation}
S[T^3;(a_i); f,g] = -\frac12 (w_f a_Y + a_X w_g)+\tilde \com(f,g).
\label{eq:3d-action-result}
\end{equation}
We now compare this with \eqref{get-nu} and \eqref{get-c}
to read off the fermion parity $\mu(f)$ and the commutator map $\com(f,g)$.
Taking $g=e$, the constant map sending $S^1_Y$ to the identity,
and comparing \eqref{eq:3d-action-result} against \eqref{get-nu}, we find
\begin{equation}
\mu(f)= w_f \mod 2. \label{eq:nu-w}
\end{equation}
Taking both $S^1_X$ and $S^1_Y$ to be NS, 
we have $a_X=a_Y=0$. 
Comparing \eqref{eq:3d-action-result} against \eqref{get-c} and using \eqref{eq:nu-w}, we have \begin{equation}
\com(f,g)=\tilde \com(f,g) - \frac12 w_fw_g.
\end{equation}
This is the relation we already pointed out in \eqref{eq:tilde-c-vs-c}.

\subsection{Position-dependent \texorpdfstring{$SU(2)$}{SU(2)} transformation in four dimensions} 
\label{sec:YYY}

\subsubsection{The property to be derived}
The analysis above can be generalized to $SU(2)$ symmetry in four dimensions.
Namely, we want to consider four-dimensional theory with $SU(2)$ symmetry on 
$M_3$ times the time direction,
and consider the action of position-dependent symmetry operation
specified by $M_3\to SU(2)$.
Take $f,g : M_3\to SU(2)$ whose supports are disjoint balls. 
Then $f$ and $g$ commute. But how about $U(f)$ and $U(g)$,
the unitary operators which implement position-dependent $SU(2)$ symmetry operations
on the Hilbert space $\cH$ of the theory?
We will find that \begin{equation}
U(f)U(g)=U(g)U(f)
\end{equation} when the $SU(2)$ symmetry is non-anomalous,
but we have \begin{equation}
U(f)U(g)=(-1)^{w_f w_g} U(g)U(f)\label{eq:4dfggf}
\end{equation} when the $SU(2)$ symmetry has the Witten anomaly \cite{Witten:1982fp},
where $w_{f,g}$ are the winding numbers of the maps $f, g:M_3\to SU(2)$.
This, for example, gives a Hamiltonian understanding of why it is impossible to gauge $SU(2)$
if there is the Witten anomaly, as there simply is no state in the Hilbert system which is invariant under all $U(f)$.

Before going into the computation,
we should mention that this result \eqref{eq:4dfggf} is not very surprising
if we use the following heuristic argument, employing the viewpoint B of footnote \ref{foot:fundamental}.
We start from a 4d $SU(n)$ QCD with two flavors.
It has $SU(2)_L\times SU(2)_R$ flavor symmetry.
This is expected to confine and to be described by an $SU(2)$ sigma model in the infrared,
where $SU(2)_L$ and $SU(2)_R$ act from the left and the right of this sigma model manifold.

Let us focus on $SU(2)_L$. This has Witten anomaly if and only if $n$ is odd.
We now consider the operator $U(f)$ for a map $f:\bR^3\to SU(2)$ of winding number $1$.
When $U(f)$ is acted upon to the vacuum in the infrared description,
this creates a Skyrmion, as $f$ acts on the $SU(2)$ sigma model field 
as a chiral $SU(2)$ action.
A Skyrmion is a low-energy sigma model representation of a baryon,
and is therefore a boson or a fermion depending on whether $n$ 
is even or odd \cite{Witten:1983tw,Witten:1983tx}.
This means that, when $n$ is odd, $U(f)$ is a creation operator of a fermionic Skyrmion.
Therefore, such operators $U(f)$ and $U(g)$ should anticommute
when the support of $f$ and the support of $g$ are disjoint.
Finally, the commutation relation of $U(f)$ and $U(g)$ is a topological property of the system
determined by the anomaly of the symmetry in question,
and in particular should not depend on the choice of the concrete theory discussed.
This should mean that the anticommutation thus found should be a universal
consequence of the Witten anomaly.
What we will provide below is the computation of this commutation relation
using the formalism we developed in this paper,
in the viewpoint A of footnote \ref{foot:fundamental}.

\subsubsection{Computation}

To compute the commutation relation,
we consider $M_5=M_3\times S^1_X\times S^1_Y$,
where $M_3$ plays the role of $S^1_A$ in the two-dimensional case.
We then construct an $SU(2)$ bundle on $M_3\times S^1_X$ using the gauge transformation $f$,
and an $SU(2)$ bundle on $M_3\times S^1_Y$ using the gauge transformation $g$.\footnote{%
An explicit gauge field configuration on $M_3\times S^1_Y$ can be chosen as follows.
We start from a trivial $SU(2)$ bundle over $M_3 \times [0,1]$, 
and parameterize $[0,1]$ by $t$. 
The gauge field is given by an adjoint valued 1-form $A(t)$, where the dependence on 
the coordinates on $M_3$ is left implicit in the notation.
We need $A(1)=g^{-1} A(0) g + g^{-1} d g$. 
This can be achieved by setting $A(t)=h(t) g^{-1} dg$, 
where $h$ is a monotonic function such that $h(t)=0$ for $0\le t<a$,
$h(t)=1$ for $b<t\le 1$ and interpolating between $0$ and $1$ in $a<t<b$.
By construction, this configuration has instanton number one,
and the curvature is nonzero only in the region where $g$ is nontrivial
and $a<t<b$ at the same time.
}
The five-dimensional configuration is as drawn in Fig.~\ref{fig:terrible-drawing}.
We would like to evaluate the bulk invertible phase in this configuration,
which we denote as \begin{equation}
S[M_5,\sigma,(f,g)] \in \bR/\bZ.
\end{equation}

\begin{figure}[h]
\centering
\includegraphics[width=1.15\textwidth]{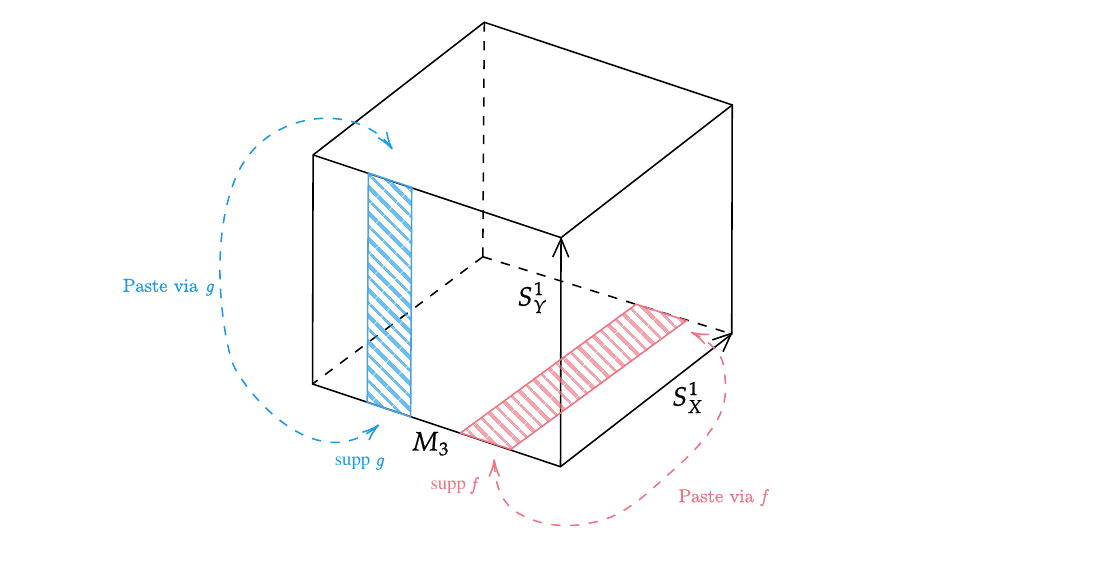}
\caption{The 5d configuration we employ.
The light-red shaded region is where the $SU(2)$ field is nontrivial due to the gauge transformation by $f$.
(This entire gauge configuration on $M_3\times S^1_X$ is pulled back rather trivially to $S^1_Y$,
and therefore it might be better to fill by light-red along the $S^1_Y$ direction.
That would make the figure more complicated, so we decided not to do so.)
Similarly, the light-blue shaded region is from $g$.
\label{fig:terrible-drawing}}
\end{figure}

The five-dimensional bulk invertible phase is a quantum field theory,
and therefore has the pasting property.
Namely, decompose $M_3= (\supp f ) \sqcup (\supp g) \sqcup M_3'$,
and pick two functions $\tilde f$, $\tilde g$ not necessarily equal to $f$, $g$
but such that $\supp \tilde f\subset \supp f$ and $\supp \tilde g\subset \supp g$.
Then we have\footnote{%
Note that three terms on the right hand side of this equation has the following subtlety.
The exponential of the second term, for example,
should be thought of as a unitary map $$
e^{2\pi iS[M_3'\times T^2_{X,Y}, \sigma, (e,e)]}
: \cH( \partial(\supp f)\times T^2_{X,Y})
\to \cH( \partial(\supp g)\times T^2_{X,Y})
$$
where $\cH(M_4)$ is the Hilbert space of the invertible theory
on the manifold $M_4$ (which plays the role of the constant time slice) with trivial gauge field.
$\cH(M_4)$ for any $M_4$ are all one dimensional, 
but there is no canonical identification with $\mathbb{C}$.
Therefore $e^{2\pi iS[M_3'\times T^2_{X,Y}, \sigma, (e,e)]}$
is a number only after choosing the bases of $\cH( \partial(\supp f)\times T^2_{X,Y})$
and  $\cH( \partial(\supp g)\times T^2_{X,Y})$.
This dependence on the choice of the bases is a manifestation of the anomaly of the  boundary theory of this 
invertible field theory.
An important point for us is that these anomalous variations cancel out when we sum these three terms,
and the left hand side of \eqref{S=SSS} is naturally a number.
To carry out the argument which follows, we need to pick the bases of $\cH(M_4)$ 
once for each $M_4$ appearing in the discussion, and keep them fixed throughout.
}
\begin{multline}
S[M_5,\sigma,(\tilde f,\tilde g)] = 
S[(\supp f )\times T^2_{X,Y}, \sigma, (\tilde f,e)]\\
+S[M_3' \times T^2_{X,Y}, \sigma, (e,e)]
+S[(\supp g )\times T^2_{X,Y}, \sigma, (e, \tilde g)].
\label{S=SSS}
\end{multline}
From this we can check that the following identity 
\begin{equation}
S[M_5,\sigma,(f,g)] 
+ S[M_5,\sigma,(e,e)] 
= S[M_5,\sigma,(f,e)] + S[M_5,\sigma,(e,g)]
\label{SS=SS}
\end{equation}
is satisfied, by simply rewriting both sides of \eqref{SS=SS} using \eqref{S=SSS}.
This equation allows us to determine $S[M_5,\sigma,(f,g)]$
from $S[M_5,\sigma,(f,e)]$, $S[M_5,\sigma,(e,g)]$ and $S[M_5,\sigma,(e,e)]$,
which are simpler to compute.

The remaining task then is to evaluate $S[M_5,\sigma,(f,e))]$;
the other two are entirely analogous.
The bulk manifold is now $M_5 = (M_3\times S^1_X)\times S^1_Y$,
such that the holonomy of the $SU(2)$ transformation around $S^1_X$ is 
specified by $f:M_3\to SU(2)$, 
while the holonomy $S^1_Y$ is taken to be the identity $e$.
The whole configuration is therefore the product
of an $SU(2)$ connection defined on $M_4=M_3\times S^1_X$
and a configuration on $S^1_Y$ which is trivial except for the spin structure.
The construction up to this point was completely general.
Now we specialize to the case of the Witten anomaly,
for which the invertible phase is the $\eta$-invariant for the fermion in the doublet representation
of the $SU(2)$ gauge field.
We can then use  the product formula \eqref{eq:eta-product-formula} to finish  the computation.
We have \begin{align}
S[M_5,\sigma,(f,e))]
= \eta(M_4\times S^1_Y)
= (\text{index on $M_4$}) \times (\text{$\eta$ on $S^1_Y$}).
\label{eq:M4prod}
\end{align}

The $SU(2)$ connection on $M_4=M_3 \times S^1_X$
is constructed by taking a trivial configuration on $M_3\times [0,1]$
and gluing the two ends of $[0,1]$ with a gauge transformation $f:M_3\to SU(2)$.
This means that the index of the $SU(2)$ connection 
on $M_4$ is the winding number $w_f$ of $f$.
The $\eta$-invariant on $S^1_Y$ is $a_Y/2$,
where $a_Y$ is $0$ or $1$ when $S^1_Y$ is in the NS or R-sector, respectively.
Plugging these information in \eqref{eq:M4prod}, we find 
\begin{equation}
S[M_5,\sigma,(f,e))] =\frac12 a_Y w_f \mod 1.
\end{equation}
From this, we find the fermion number $\mu(f)$ of the unitary operator $U(f)$ to be given by $\mu(f)=w_f$ as before.
Then taking $a_X=a_Y=0$, we see $S[M_5,\sigma,(f,g)]=0$.
Comparing with \eqref{get-c} and using $\mu(f)=w_f$, we find \begin{equation}
\com(f,g)=  \frac{1}{2}w_f w_g \mod 1
\end{equation}
exactly as before.

\section*{Acknowledgements}

The authors thank discussions with Kotaro Kawasumi.
YT thanks discussions with Sahand Seifnashri and Nikita Sopenko back in the spring of 2024 in Princeton
which became a source of inspiration of this paper.
SS  thanks Kantaro Ohmori for the discussions on the content of Sec.~\ref{sec:1st-derivation}.

MO, YT, and YZ are supported in part  
by WPI Initiative, MEXT, Japan at Kavli IPMU, the University of Tokyo.
MO is supported by FoPM (WINGS Program of the University of Tokyo), 
JSPS Research Fellowship for Young Scientists, and
JSPS KAKENHI Grant No.\ JP23KJ0650.
SS is supported by the World-leading Innovative Graduate Study Program for Frontiers of Mathematical Sciences and Physics (WINGS-FMSP), the University of Tokyo.
YT is supported by JSPS KAKENHI Grant-in-Aid (Kiban-C), No.24K06883. 
YZ is supported by National Science Foundation of China (NSFC) under Grant No. 12305077. 

\appendix

\section{Explicit calculation of the commutator map with  \texorpdfstring{$n$}{n} patches}
\label{sec:general-n-calc}

In this appendix, we explicitly calculate the commutator map of the position-dependent symmetry operator $U_\text{no}(f)$ with general $n$ patches:
\begin{align}
\label{eq:trun_sym_op_ap}
    U_\text{no}(f) ={} : \exp \left[ 2\pi i \left( \sum_{u=1}^n \int_{\sigma_u} dx\, f_u(x)  J_0 (x) - \sum_{u = 1}^n w^f_{u,u+1}\, \chi(\sigma_{u,u+1})  \right) \right] : \,.
\end{align}
From the mode expansion \eqref{Jexp} and \eqref{chiexp}, $U_\text{no}(f)$ is written as
\begin{align}
    U_{\text{no}}(f)  =  e^{2\pi i A_{-}(f)}  e^{2\pi i B_{0}(f)} e^{2\pi i A_{0}(f)}   e^{2\pi i A_{+}(f)} \,,
\end{align}
where
\begin{align}
    A_{-}(f) &= \sum_{s=1}^{\infty} \left( \sum_{u=1}^n \int_{\sigma_u} dx\, f_u(x) \frac{1}{2\pi} e^{-isx} - \sum_{u = 1}^n w^f_{u,u+1}\, \frac{1}{2\pi i(-s)} e^{-is\sigma_{u,u+1}}  \right)  J_{-s} \,, \\
    B_{0}(f) &= \left( - \sum_{u = 1}^n w^f_{u,u+1}\, \frac{1}{2\pi}  \right) I_0 \,, \\
    A_{0}(f) &= \left( \sum_{u=1}^n \int_{\sigma_u} dx\, f_u(x) \frac{1}{2\pi} - \sum_{u = 1}^n w^f_{u,u+1}\, \frac{1}{2\pi} \sigma_{u,u+1} \right) J_0 \,, \\
    A_{+}(f) &= \left( \sum_{u=1}^n \int_{\sigma_u} dx\, f_u(x) \frac{1}{2\pi} e^{isx} - \sum_{u = 1}^n w^f_{u,u+1}\, \frac{1}{2\pi is} e^{is\sigma_{u,u+1}}  \right) J_s \,. 
\end{align}
Considering the product between $U_\text{no}(f)$ and $U_\text{no}(g)$, we can always take the same patches $\{\sigma_u \mid u = 1,2,...,n\}$ of $f$, $g$ and $f+g$. From the commutation relations, we can write $U_{\text{no}}(f+g)$ as
\begin{align}
\begin{split}
    U_{\text{no}}(f+g) &= e^{2\pi i A_{-}(f+g)}  e^{2\pi i B_{0}(f+g)} 
 e^{2\pi i A_{0}(f+g)}   e^{2\pi i A_{+}(f+g)} \\
    &= e^{2\pi i A_{-}(f)} e^{2\pi i A_{-}(g)} e^{2\pi i B_{0}(f)} e^{2\pi i B_{0}(g)}  e^{2\pi i A_{0}(f)}  e^{2\pi i A_{0}(g)}  e^{2\pi i A_{+}(f)} e^{2\pi i A_{+}(g)} \,.
\end{split}
\end{align}
The commutators $[A_+(f), A_-(g)]$ and $[A_0(f), B_0(g)]$ are $c$-numbers, so we have
\begin{align}
\begin{split}
    U_{\text{no}}(f) \cdot U_{\text{no}}(g) &= e^{2\pi i A_{-}(f)} e^{2\pi i B_{0}(f)}  e^{2\pi i A_{0}(f)}   e^{2\pi i A_{+}(f)} \cdot e^{2\pi i A_{-}(g)} e^{2\pi i B_{0}(g)} e^{2\pi i A_{0}(g)}   e^{2\pi i A_{+}(g)}  \\
    &= e^{2\pi i \cdot 2\pi i \left(  [A_+(f), A_-(g)] + [A_0(f), B_0(g)] \right) }\  U_{\text{no}} (f+g) \,.
\end{split}
\end{align}
Then, the 2-cocycle for $n$ patches is given by
\begin{align}
\begin{split}
\label{eq:2-cocycle_div}
    \coc_n(f,g) &=  2 \pi i \left( [A_+(f), A_-(g)] + [A_0(f), B_0(g)] \right) \\
    &= k \left[ \left( \sum_{u=1}^n \int_{\sigma_u} dx\, \sum_{v=1}^n \int_{\sigma_v} dy\, f_u(x) \partial_y g_v(y) \right) \sum_{s=1}^{\infty} \frac{1}{2\pi} e^{is(x - y)} \right. \\
    & \quad -   \left( \sum_{u=1}^n \sum_{v=1}^n \int_{\sigma_v} dy\, w^f_{u,u+1} g_v(y) \right)\sum_{s=1}^{\infty}  \frac{1}{2\pi} e^{is(\sigma_{u,u+1} - y)}  \\
    & \quad +  \left( \sum_{u=1}^n \sum_{v=1}^n  w^f_{u,u+1} w^g_{v,v+1} \right) \sum_{s=1}^{\infty} \frac{i}{2\pi s} e^{is(\sigma_{u,u+1} - \sigma_{v,v+1})} \\
    & \left. \quad +  \left( \sum_{u=1}^n \sum_{v=1}^n  \int_{\sigma_u} dx\, f_u (x) w^g_{v,v+1} \right) \frac{1}{2\pi} -  \left( \sum_{u=1}^n \sum_{v=1}^n   w^f_{u,u+1} w^g_{v,v+1} \right) \frac{1}{2\pi} \sigma_{u,u+1} \right]\,.
\end{split}
\end{align}
We can also have another expression by integrating by parts:
\begin{align}
\begin{split}
    \coc_n(f,g) 
    &= - k \left[ \left( \sum_{u=1}^n \int_{\sigma_u} dx\, \sum_{v=1}^n \int_{\sigma_v} dy\, \partial_x f_u(x)\,  g_v(y) \right) \sum_{s=1}^{\infty} \frac{1}{2\pi} e^{is(x - y)} \right. \\
    & \quad +   \left( \sum_{u=1}^n \int_{\sigma_u} dx\,  \sum_{v=1}^n  f_u(x) w^g_{v,v+1}  \right)\sum_{s=1}^{\infty}  \frac{1}{2\pi} e^{is(x - \sigma_{v,v+1})} \\
    & \quad + \left( \sum_{u=1}^n \sum_{v=1}^n  w^f_{u,u+1} w^g_{v,v+1} \right) \sum_{s=1}^{\infty} \frac{i}{2\pi s} e^{is(\sigma_{u,u+1} - \sigma_{v,v+1})} \\
    & \left. \quad + \left( \sum_{u=1}^n \sum_{v=1}^n  \int_{\sigma_u} dx\, f_u (x) w^g_{v,v+1} \right) \frac{1}{2\pi} - \left( \sum_{u=1}^n \sum_{v=1}^n   w^f_{u,u+1} w^g_{v,v+1} \right) \frac{1}{2\pi} \sigma_{u,u+1} \right] \,.
\end{split}
\end{align}
These expressions $\coc_n(f,g)$ are divergent, so we regularize \eqref{eq:2-cocycle_div} as
\begin{align}
\begin{split}
\label{eq:2-cocycle_reg}
    \coc_{n,\epsilon}(f,g) &= k\left[ \left( \sum_{u=1}^n \int_{\sigma_u} dx\, \sum_{v=1}^n \int_{\sigma_v} dy\, f_u(x) \partial_y g_v(y) \right) \sum_{s=1}^{\infty} \frac{1}{2\pi} e^{is(x - y) - s \epsilon} \right.  \\
    & \quad -   \left( \sum_{u=1}^n \sum_{v=1}^n \int_{\sigma_v} dy\, w^f_{u,u+1} g_v(y) \right)\sum_{s=1}^{\infty}  \frac{1}{2\pi} e^{is(\sigma_{u,u+1} - y) - s \epsilon} \\
    & \quad +  \left( \sum_{u=1}^n \sum_{v=1}^n  w^f_{u,u+1} w^g_{v,v+1} \right) \sum_{s=1}^{\infty} \frac{i}{2\pi s} e^{is(\sigma_{u,u+1} - \sigma_{v,v+1}) - s \epsilon} \\
    & \left. \quad + \left( \sum_{u=1}^n \sum_{v=1}^n  \int_{\sigma_u} dx\, f_u (x) w^g_{v,v+1} \right) \frac{1}{2\pi} - \left( \sum_{u=1}^n \sum_{v=1}^n   w^f_{u,u+1} w^g_{v,v+1} \right) \frac{1}{2\pi} \sigma_{u,u+1} \right] \,,
\end{split}
\end{align}
where $\epsilon$ is an infinitesimal positive real number.\footnote{
For the free massless theory, the mode of $J^t(t,x)$ is given by $J_n(t) = J_n e^{int}$. Then, this regularization can be obtained by considering a product $U_{\text{no}}(f)[t+i\epsilon/2] \cdot U_{\text{no}}(g)[t-i\epsilon/2]$ where $U_{\text{no}}(f)$ and $U_{\text{no}}(g)$ are supposed to be at slightly different times. 
Other slightly different regularizations can be obtained by 
applying the same argument where the mode is written as $J_n(t) = J_n e^{i\omega(n)t}$ and
$\omega(n)$ satisfies a suitable dispersion relation.
We  consider the regularized 2-cocycle \eqref{eq:2-cocycle_reg} simply for brevity.
}
The commutator map with $\epsilon$ is given by
\begin{align}
\begin{split}
    \com_{n,\epsilon}(f,g) &= \coc_{n,\epsilon}(f,g) - \coc_{n,\epsilon}(g,f)  \\
    &= k \left[ \left( \sum_{u=1}^n \int_{\sigma_u} dx\, \sum_{v=1}^n \int_{\sigma_v} dy\, f_u(x) \partial_y g_v(y) \right) \delta_\epsilon^{(\text{P})}(x - y) \right. \\
    & \quad -    \left( \sum_{u=1}^n \sum_{v=1}^n \int_{\sigma_v} dy\, w^f_{u,u+1} g_v(y) \right) \delta_\epsilon^{(\text{P})}(\sigma_{u,u+1} - y) \\
    &\left. \quad -  \left( \sum_{u=1}^n \sum_{v=1}^n  w^f_{u,u+1} w^g_{v,v+1} \right) \theta_\epsilon^{(\text{P}')} (\sigma_{u,u+1} - \sigma_{v,v+1}) \right] \,,
\end{split}
\end{align}
where
\begin{align}
    \delta_\epsilon^{(\text{P})} (x) &= \sum_{n \in \mathbb{Z}} \frac{1}{2\pi} e^{-inx - |n| \epsilon} \,,\\
    \theta_\epsilon^{(\text{P}')} (x) &= \sum_{n \in \mathbb{Z} \setminus \{0\} } \frac{i}{2\pi n} e^{-inx - |n| \epsilon} + \frac{1}{2\pi} x \,.
\end{align}
In this formula, we can take the limit $\epsilon \rightarrow +0$, such that we have $\delta_\epsilon^{(\text{P})} (x) \rightarrow \delta^{(\text{P})} (x) $ and $\theta_\epsilon^{(\text{P}')} (x) \rightarrow\theta^{(\text{P}')} (x)$, and then the 2-commutator map becomes
\begin{align}
\begin{split}
    \com_n(f,g) &= \lim_{\epsilon \rightarrow +0} \com_{n,\epsilon}(f,g) \\
    &= k \left[ \left( \sum_{u=1}^n \int_{\sigma_u} dx\, \sum_{v=1}^n \int_{\sigma_v} dy\, f_u(x) \partial_y g_v(y) \right) \delta^{(\text{P})}(x - y) \right. \\
    & \quad -    \left( \sum_{u=1}^n \sum_{v=1}^n \int_{\sigma_v} dy\, w^f_{u,u+1} g_v(y) \right) \delta^{(\text{P})}(\sigma_{u,u+1} - y) \\
    &\left. \quad -  \left( \sum_{u=1}^n \sum_{v=1}^n  w^f_{u,u+1} w^g_{v,v+1} \right) \theta^{(\text{P}')} (\sigma_{u,u+1} - \sigma_{v,v+1}) \right] \,.
\end{split}
\end{align}
Note that
\begin{align}
\begin{split}
    \delta_\epsilon^{(\text{P})} (x) &= \sum_{n \in \mathbb{Z}} \frac{1}{2\pi} e^{-inx - |n| \epsilon} \\
    &= \frac{1}{2\pi} \left(1 +  \frac{1}{e^{ix + \epsilon} - 1} + \frac{1}{e^{-ix + \epsilon} - 1} \right) 
\end{split}
\end{align}
is an even function. Therefore, we use the natural endpoint regularization of the periodic delta function
\begin{align}
    \int^{b}_{a}dx\, f(x) \delta^{(\text{P})}(x-a) = \frac{1}{2}f(a) \,,\\
    \int^{b}_{a}dx\, f(x) \delta^{(\text{P})}(x - b) = \frac{1}{2}f(b) \,.
\end{align}
Performing the integral of the periodic delta function, we obtain
\begin{align}
\begin{split}
    \com_n(f,g) 
    &= k \left[  \sum_{u=1}^n \int_{\sigma_u} dx\,  f_u(x) g'_u(x)  \right. \\
    & \quad -   \sum_{u=1}^n  w^f_{u,u+1}  \left( \frac{1}{2} g_u(\sigma_{u,u+1}) + \frac{1}{2} g_{u+1}(\sigma_{u,u+1}) \right)  \\
    &\left. \quad -  \left( \sum_{u=1}^n \sum_{v=1}^n  w^f_{u,u+1} w^g_{v,v+1} \right) \theta^{(\text{P}')} (\sigma_{u,u+1} - \sigma_{v,v+1}) \right]  \,.
\end{split}
\end{align}
Integration by parts yields the expression
\begin{align}
\begin{split}
\label{eq:commap_n}
    \com_n(f,g) 
    = &\frac{k}{2} \left[ \sum_{u=1}^n \int_{\sigma_u} dx\, \left( f_u(x) g'_u(x) - f'_u(x) \, g_u(x) \right) \right. \\
    &   \quad + \sum_{u=1}^n  f_u(\sigma_{u,u+1}) w^g_{u,u+1} - \sum_{u=1}^n w^f_{u,u+1}  g_u(\sigma_{u,u+1}) \\
    & \quad \left. + \sum_{1 \leq u < v \leq n} ( w^f_{u,u+1} w^g_{v,v+1} - w^f_{v,v+1} w^g_{u,u+1} )  \right] \,.
\end{split}
\end{align}
Note that the commutator map $\com_n(f,g)$ is defined modulo $1$, so we can freely add integer terms.
For $n=1$ (see Fig.~\ref{fig:patch1}), we have
\begin{align}
\begin{split}
    \com_1(f,g) 
    &= \frac{k}{2} \left[ \int_{\sigma_1} dx\,  \left( f_1(x) g'_1(x) - f'_1(x) \, g_1(x) \right) \right. \\
    & \left. \quad +  f_1(\sigma_{1,2}) w^g_{1,2} -  w^f_{1,2}  g_1(\sigma_{1,2})  \right] \\
    &= \com(f,g) \,,
\end{split}
\end{align}
where $\com(f,g)$ is the one that appeared in \eqref{eq:commap_1}.

We can also consider the case where $f$ and $g$ are each defined on a different single patch and they jump at different  points. 
We let the winding points of $f$ and $g$ be $0$ and $p$ respectively. Then, instead of considering that case, we may consider the case $n = 2$ where the winding points are $[\sigma_{0,1} = 0] \sim [\sigma_{2,3} = 2\pi]$ and $\sigma_{1,2} = p$, and the winding integers are $w^f_{2,1} = w^f_{0,1} = w^f_{2,3}= w_f$, $w^f_{1,2} = 0$, $w^g_{2,1} = w^g_{0,1} = w^g_{2,3} = 0$ and $w^g_{1,2} = w_g$ (see Fig.~\ref{fig:patch2}). The commutator map is
\begin{align}
    \com_2(f,g) 
    &= \frac{k}{2} \biggl[ \int_{\sigma_1} dx\, \left( f_1(x) g'_1(x) - f'_1(x) \, g_1(x) \right) + \int_{\sigma_2} dx\, \left( f_2(x) g'_2(x) - f'_2(x) \, g_2(x) \right) \nonumber\\
    & \quad +  f_1(\sigma_{1,2}) w_g  -  w_f  g_2(\sigma_{2,3})  - w^f_{2,3} w^g_{1,2} \biggr] \\
    &= \frac{k}{2} \left[ \int_0^{2\pi} dx\, \left( f(x) g'(x) - f'(x) \, g(x) \right)     + f(p) w_g  -  w_f  g(0)  - w_f w_g \right] \,,
\end{align}
where $f(x)$ and $g(x)$ are
\begin{align}
    f(x) = 
\begin{cases}
    f_1(x) \,, & 0 \leq x < p  \,, \\
    f_2(x) \,, & p \leq x \leq 2\pi \,,
\end{cases}
\end{align}
and
\begin{align}
    g(x) = 
\begin{cases}
    g_1(x) \,, & 0 \leq x < p  \,, \\
    g_2(x) \,, & p \leq x \leq 2\pi \,.
\end{cases}
\end{align}

\begin{figure}
  \begin{center}
  \begin{minipage}{0.45\hsize}
   \begin{center}
    \includegraphics[width=\hsize]{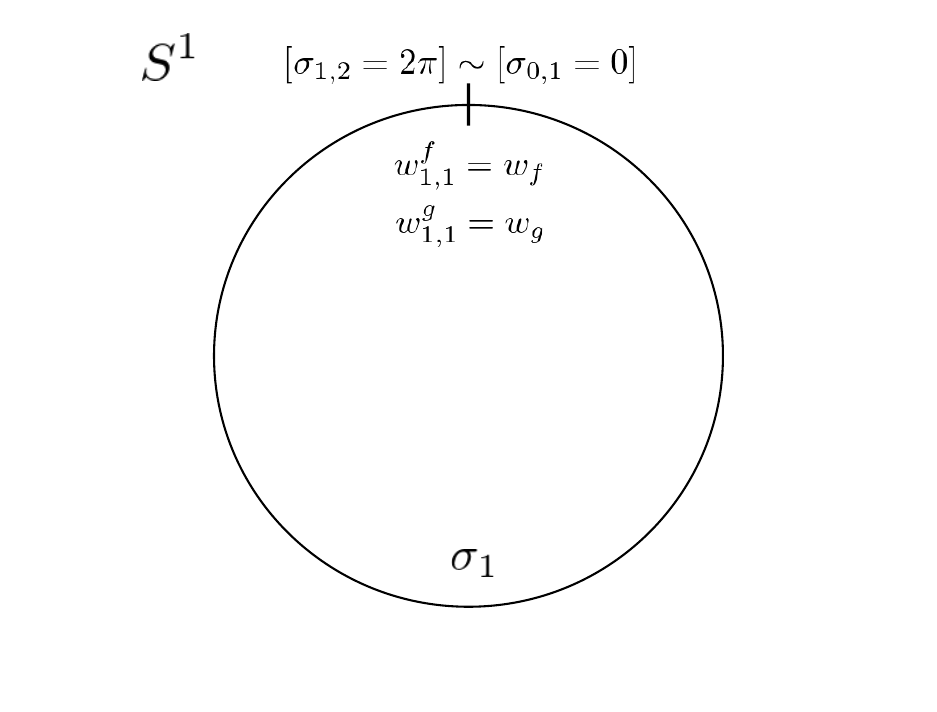} 
    \end{center}
          \vspace{-0.5cm}
      \caption{The $n=1$ patching. {\color{white} where the winding points of $f$ and $g$ are $0$ and $0$ respectively.} }
      \label{fig:patch1}
     \end{minipage}
     \hspace{0.05\hsize}
     \begin{minipage}{0.45\hsize}
      \begin{center}
       \includegraphics[width=\hsize]{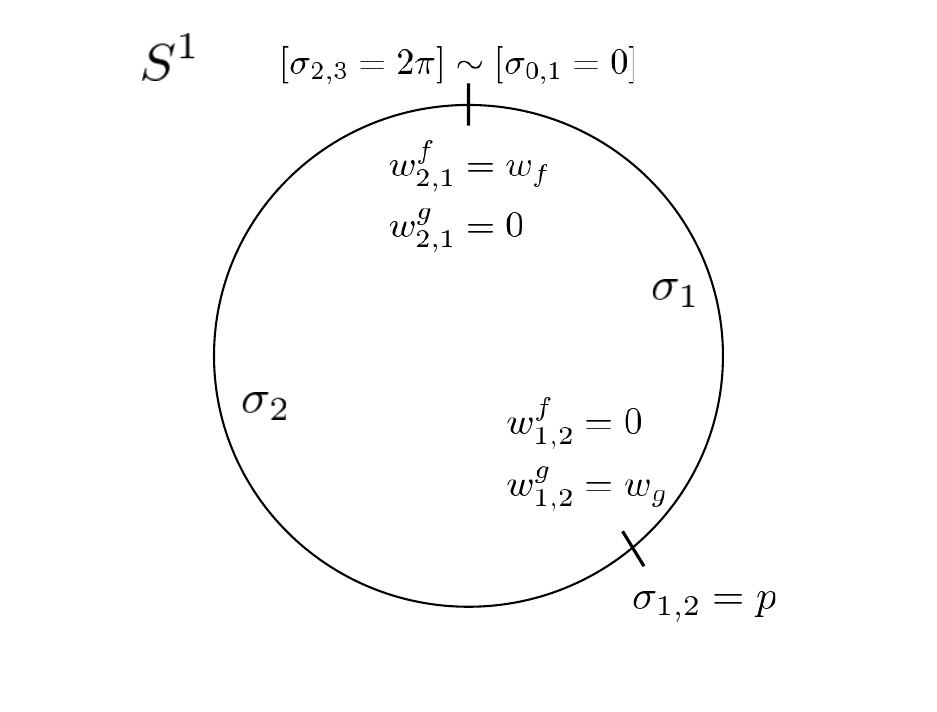}
      \end{center}
      \vspace{-0.5cm}
      \caption{The $n=2$ patching, where the winding points of $f$ and $g$ are $0$ and $p$ respectively.}
      \label{fig:patch2}
     \end{minipage}
    \end{center}
\end{figure}

\section{Proof of the uniqueness of the commutator map}
\label{sec:proof-of-uniquness}

In this appendix, we prove Theorem \ref{thm:uniqueness-com}, the uniqueness of the commutator map satisfying the consistency conditions ($\com$-0)--($\com$-3).

To prove this theorem,
let us prepare some notations.
We take four points $0 < a_1 < a_2 < a_3 < a_4 < 2\pi$,
and use the following two functions $z_L, z_R\in \cF$ of winding number one:
\begin{align}
z_L(x)&:=\left\{\begin{array}{ll}
0 & (0 \leq x < a_1)  \\
\text{interpolate} & (a_1 \leq x < a_2)\\
1 & (a_2 \leq x \leq 2\pi)
\end{array}\right.,
\\
z_R(x)&:=\left\{\begin{array}{ll}
0 & (0 \leq x < a_3)  \\
\text{interpolate} & (a_3 \leq x < a_4)\\
1 & (a_4 \leq x \leq 2\pi)
\end{array}\right.,
\end{align}
with $\supp z_L=[a_1,a_2]$ and $\supp z_R=[a_3,a_4]$.
We also define a function $z_0\in\cF$ of winding number zero by $z_L=z_R+z_0$.
See Fig.~\ref{fig:zzz} for an illustration.
\begin{figure}[H]
\[\vcenter{\hbox{\includegraphics[width=.9\textwidth]{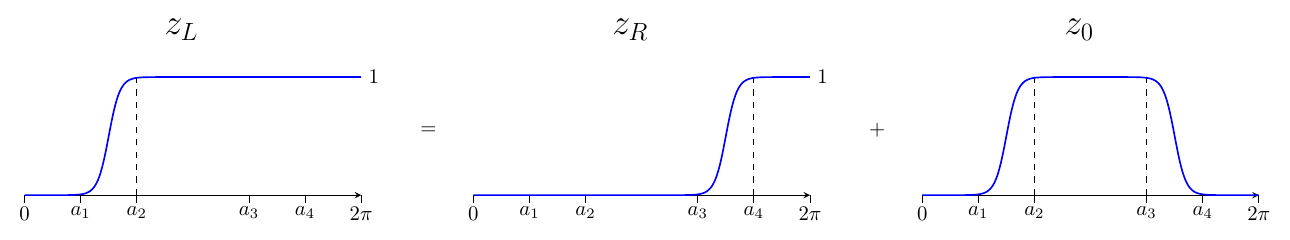}}}\]
\caption{Profile functions used in the proof.
\label{fig:zzz}}
\end{figure}

\begin{lemma}
\label{lemma:wn-0-1-commutator}
If $f_0\in\cF$ has winding number zero and satisfies $\supp f_0 \cap \supp z_L=\varnothing$,
then $\com(f_0,z_L)=0 \bmod 1$.
Similarly, the statement where $z_L$ is replaced with $z_R$ also holds.
\end{lemma}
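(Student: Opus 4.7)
The plan is to run a divisibility-by-two argument. Condition ($\com$-3) immediately gives $\com(f_0,z_L)\in\{0,\tfrac12\}\bmod 1$, and I will upgrade this to $\com(f_0,z_L)\equiv 0\bmod 1$ by exhibiting an element $f_0'\in\cF$ with $f_0'+f_0'=f_0$ (as real functions) and $\supp f_0'\cap\supp z_L=\varnothing$. Bi-additivity ($\com$-1) applied to such an $f_0'$ then forces
\[
\com(f_0,z_L) \;=\; 2\,\com(f_0',z_L) \;\in\; 2\cdot\{0,\tfrac12\} \;\equiv\; 0 \bmod 1,
\]
which is the conclusion of the lemma.

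The work is in constructing $f_0'$. First I pick a representative $f_0:\bR\to\bR$ in $\cF$; because the winding number of $f_0$ is zero, this representative can be taken honestly $2\pi$-periodic. Next, the disjointness $\supp f_0\cap\supp z_L=\varnothing$, together with $\supp z_L=[a_1,a_2]$, forces $f_0(x)\in\bZ$ for every $x\in[a_1,a_2]$, and continuity then makes $f_0$ equal to a single integer $n$ throughout the connected interval $[a_1,a_2]$. Using ($\com$-0) I replace $f_0$ by $\tilde f_0:=f_0-n\in\cF$, which does not change $\com(f_0,z_L)\bmod 1$ and which vanishes as a real function on all of $[a_1,a_2]$. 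I then set $f_0':=\tilde f_0/2$; this is continuous, piecewise smooth, and strictly $2\pi$-periodic, so $f_0'\in\cF$, and since $f_0'\equiv 0$ pointwise on $[a_1,a_2]$ we have $\supp f_0'\cap\supp z_L=\varnothing$ as required.

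Applying ($\com$-3) to $f_0'$ now gives $\com(f_0',z_L)\in\tfrac12\bZ/\bZ$, and bi-additivity yields
\[
\com(f_0,z_L) \;\equiv\; \com(\tilde f_0,z_L) \;=\; \com(f_0'+f_0',z_L) \;=\; 2\,\com(f_0',z_L) \;\equiv\; 0 \bmod 1.
\]
The $z_R$ case is obtained by the same argument with $[a_3,a_4]$ replacing $[a_1,a_2]$. The only real obstacle is that elements of $LU(1)$ cannot in general be divided by two, since the mod-$1$ periodicity built into $\cF$ does not respect halving: it is precisely the winding-zero hypothesis that supplies a strictly $2\pi$-periodic representative, and the disjoint-support hypothesis that permits the further integer shift making this representative vanish on $\supp z_L$, so that the halved function still lies in $\cF$ and still has support disjoint from $\supp z_L$.
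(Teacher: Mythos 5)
Your proof is correct and follows essentially the same route as the paper's: shift $f_0$ by the integer constant it takes on $\supp z_L$ (harmless by ($\com$-0)), divide the resulting winding-zero function by an integer while preserving disjointness of supports, and combine the locality condition ($\com$-3) with bi-additivity ($\com$-1) to kill the potential value $\tfrac12$. The only cosmetic difference is that you divide by $2$ where the paper divides by a general $t\in\bZ$, which changes nothing of substance.
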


\begin{proof}
We show the first statement.
From the locality condition, we know that $\com(f_0,z_L)\in\bR/\bZ$ is $0$ or $\frac{1}{2}$ mod $1$,
but we can show that it is in fact $0$ as follows.
By assumption, $f_0(\supp z_L)$ is an integer constant $m\in\bZ$,
and we define $\tilde{f}_0:=f_0-m$ so that $\tilde{f}_0(\supp z_L)=0$.

Since the winding number of $\tilde{f}_0$ is zero,
$\frac{1}{t}\tilde{f}_0$ is again an element of $\cF$ for arbitrary $t\in\bZ$.
Thanks to $\tilde{f}_0(\supp z_L)=0$, this $\frac{1}{t}\tilde{f}_0$ also satisfies $\supp \frac{1}{t}\tilde{f}_0 \cap \supp z_L=\varnothing$,
so we have $\com(\frac{1}{t}\tilde{f}_0,z_L) \in \frac{1}{2}\bZ$ from the locality condition.
Using the bi-additivity of $\com$, we can show $\com(\frac{1}{t}\tilde{f}_0,z_L)=\frac{1}{t}\com(\tilde{f}_0,z_L) \bmod 1$.
Now, $\frac{1}{t}\com(\tilde{f}_0,z_L) \in \frac{1}{2}\bZ$ for any $t\in\bZ$,
which means $\com(\tilde{f}_0,z_L)=0$.
Therefore, $\com(f_0,z_L)=0 \bmod 1$.
\end{proof}

Using this Lemma \ref{lemma:wn-0-1-commutator},
we can derive an explicit formula of $\com(f,g)$ for $f,g \in \cF$.
First, we define $f_0,g_0 \in \cF$ with winding number zero by $f=f_0+w_fz_L$ and $g=g_0+w_gz_L$.
By the bi-additivity and the alternating property $\com(z_L,z_L)=0$,
we have
\begin{align}
\com(f,g) = \com_0(f_0,g_0) + \com(h_0,z_L),
\end{align}
where
\begin{align}
h_0:=w_gf_0-w_fg_0 \in \cF
\end{align}
also has winding number zero.
Next, we decompose $h_0$ as
\begin{align}
h_0 = (h_0)_L + (h_0)_R,
\end{align}
where $(h_0)_L$ has support in $[0,a_3]\cup[a_4,2\pi]$ (vanishing on $[a_3,a_4]$)
and $(h_0)_R$ has support in $[0,a_1]\cup[a_2,2\pi]$ (vanishing on $[a_1,a_2]$),
by multiplying $h_0$ by a so-called partition of unity.
Since the winding number of $h_0$ is zero,
these $(h_0)_L$ and $(h_0)_R$ are again elements of $\cF$.
See Fig.~\ref{fig:decomp} for an illustration.
\begin{figure}[H]
\[\vcenter{\hbox{\includegraphics[width=.9\textwidth]{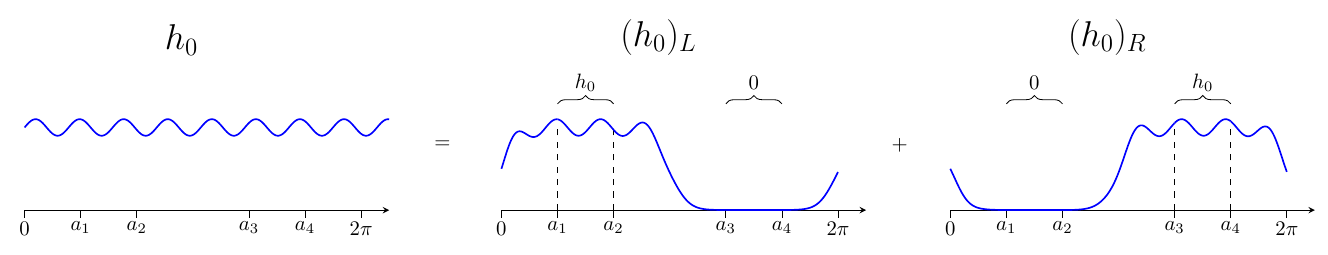}}}\]
\caption{The decomposition of $h_0$ into two parts, $h_0^L$ and $h_0^R$.
\label{fig:decomp}}
\end{figure}
Then,
\begin{align}
\com(f,g) & = \com_0(f_0,g_0) + \com((h_0)_L,z_L) + \com((h_0)_R,z_L)\\
& = \com_0(f_0,g_0) + \com_0((h_0)_L,z_0) + \com((h_0)_L,z_R) + \com((h_0)_R,z_L)\\
& = \com_0(f_0,g_0) + \com_0((h_0)_L,z_0), \label{eq:com-map-formula-1}
\end{align}
where $z_L=z_R+z_0$ is used in the second equation,
and Lemma \ref{lemma:wn-0-1-commutator} is used in the third equation.
Since we know the explicit formula \eqref{eq:B0andc0} of $\com_0$,
we have obtained one explicit formula of $\com$.
In addition, since the formula \eqref{eq:B0andc0} of $\com_0$ is invariant under any reparameterization of $S^1$,
so is $\com$; see footnote \ref{fn:reprm-inv}.

\begin{figure}[H]
\[\vcenter{\hbox{\includegraphics[width=.8\textwidth]{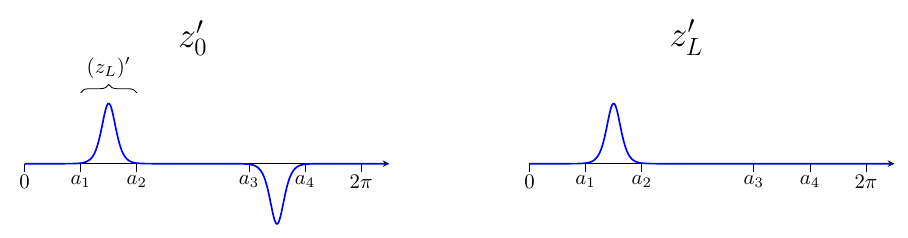}}}\]
\caption{
The graph of functions $z_0'$ and $z_L'$.\label{fig:zz}}
\end{figure}
We can further deform this formula (\ref{eq:com-map-formula-1})
so that the smeared function $(h_0)_L$ and the bump function $z_0$ do not appear as follows,
which will lead to the explicit formula of Theorem \ref{thm:uniqueness-com}.
First, by applying the formula \eqref{eq:B0andc0} of $\com_0$, the first term of (\ref{eq:com-map-formula-1}) can be deformed as
\begin{align}
\com_0(f_0,g_0) & = \frac{k}{2}\int_{S^1}\bigl((f-w_f z_L)(g'-w_g z_L') - (g-w_g z_L)(f'-w_f z_L')\bigr)dx\\
& = \frac{k}{2}\left(\int_0^{2\pi}(fg'-gf')dx - \int_0^{2\pi}h_0z_L'dx + \int_0^{2\pi}h_0'z_Ldx \right),
\end{align}
where we used an easy observation
\begin{align}
h_0 = w_g f - w_f g
\end{align}
in the last equation.\footnote{
Also note that $\int_{S^1}$ is well-defined only when the integrand has winding number zero,
and when we decompose the integration into those with nonzero-winding-number integrands,
we have to specify the domain of integration $[0,2\pi]$ before doing so.
}
On the other hand, by carefully observing the shapes of the functions on each interval $[a_i,a_{i+1}]$ as shown in Fig.~\ref{fig:zz},
we can rewrite the second term of (\ref{eq:com-map-formula-1}) as
\begin{align}
\com_0((h_0)_L,z_0) & = \frac{k}{2}\left( \int_{S^1}h_0z_L'dx - \left( \int_{[0,a_2]}h_0'z_Ldx + \int_{[a_2,a_3]}((h_0)_L)'dx \right)\right)\\
& = \frac{k}{2}\left( \int_{S^1}h_0z_L'dx - \int_{[0,a_2]}h_0'z_Ldx + h_0(a_2) \right).
\end{align}

Finally, by adding these terms, Eq.~(\ref{eq:com-map-formula-1}) becomes
\begin{align}
\com(f,g) & = \frac{k}{2} \left( \int_0^{2\pi}(fg'-gf')dx + \int_{[a_2,2\pi]}h_0'z_Ldx + h_0(a_2) \right)\\
& = \frac{k}{2} \left( \int_0^{2\pi} (fg'-gf')dx + h_0(2\pi) \right)\\
& = \frac{k}{2} \left( \int_0^{2\pi} (fg'-gf')dx +w_gf(0) - w_fg(0) \right). \label{eq:com-map-formula-2}
\end{align}
This is the formula of $\com$ in Theorem \ref{thm:uniqueness-com}.
We can see that $\com(f+1,g)=\com(f,g)+kw_g$,
so $k$ must be an integer from ($\com$-0).

Conversely,
this $\com$ in \eqref{eq:com-map-formula-2} with $k$ integer satisfies all the conditions ($\com$-0)--($\com$-3).
The conditions from ($\com$-0) to ($\com$-2) are obvious.
To see ($\com$-3),
it suffices to show that $\int_0^{2\pi}fg'dx-w_fg(0)$ is an integer if $\supp f \cap \supp g = \varnothing$.
This is easy when $0\not\in\supp g$,
and also can be checked by a straightforward calculation when $0\in\supp g$.
This ends the proof of Theorem \ref{thm:uniqueness-com}.

\section{Computational details of anomaly cocycles}

\subsection{The cocycle condition for bosonic anomaly \texorpdfstring{$\alpha$}{alpha}} 
\label{subsec:cocycleconditionboson}
We apply $\rho_l$ on both sides of \eqref{eq:Bosonkeyequality}, and get 
\begin{equation}
        (\rho_l\rho_g) (u_{h,k}) \rho_l(u_{g,h k}) = e^{2\pi i\alpha(g,h,k)} \rho_l(u_{g,h})  \rho_l(u_{gh,k})\,.
\end{equation}
Using \eqref{eq:localunitaryrepeat} again on the left-hand side (LHS)
\begin{equation}
\begin{split}
     \text{LHS} &= u_{l,g} \rho_{lg}(u_{h, k}) u_{l,g}^{-1} \rho_l(u_{g, h k}) \\
     & = u_{l,g} e^{2\pi i\alpha(l g,h,k)} u_{l g, h} u_{l g h,k} u_{lg,hk}^{-1} u_{l,g}^{-1}  e^{2\pi i\alpha(l, g, hk)} u_{l,g} u_{lg,hk} u_{l,ghk}^{-1}\,,
    \end{split}
\end{equation}
here we need to use \eqref{eq:Bosonkeyequality} twice.
For the right-hand side (RHS), we also use \eqref{eq:Bosonkeyequality} twice and the result is
\begin{equation}
\text{RHS} = e^{2\pi i(\alpha(g, h, k) +\alpha(l, g,h))} u_{l,g} u_{lg,h} u_{l,gh}^{-1} e^{2\pi i\alpha(l, gh, k)} u_{l,g h} u_{l g h,k} u_{l,g h k}^{-1}\,.
\end{equation}
Now, if we look at both sides, all the $u$'s cancel out and we get an equation on $\alpha$
\begin{equation}
    \alpha(lg,h,k)+  \alpha(l,g,hk) = \alpha(g,h,k)+ \alpha(l,g,h)+  \alpha(l,gh,k) \, \pmod \bZ
\end{equation}
which means that $\alpha$ is a $\bR/\bZ$-valued 3-cocycle, i.e.~$\delta \alpha = 0$.

\subsection{The constraint for fermionic anomaly \texorpdfstring{$(\nu,\alpha)$}{(nu, alpha)}}
\label{subsec:constraintfermion}
The computation is similar to that of the bosonic case. We apply $\rho_l$ on both sides of \eqref{eq:keyequality},
making use of \eqref{eq:fermionfusion}.
We find
\begin{equation}
\begin{split}
     \text{LHS} 
     & = (-1)^{\nu(l, g) \nu(h, k)} \lambda_{l,g} e^{2\pi i\alpha(l g,h,k)}  \lambda_{l g,h} \lambda_{l g h,k} (\lambda_{lg,hk})^{-1} (\lambda_{l,g})^{-1} \\
      & \quad \times e^{2\pi i\alpha(l, g, hk)}  \lambda_{l,g} \lambda_{l g, h k} (\lambda_{l, g hk})^{-1}\,
    \end{split}
\end{equation}
and 
\begin{equation}
\begin{split}
     \text{RHS}  & = e^{2\pi i(\alpha(g,h,k)+ \alpha(l, g,h))}\lambda_{l,g} \lambda_{l g,h} (\lambda_{l, g h})^{-1} \\
    & \quad \times e^{2\pi i\alpha(l ,g h, k)} \lambda_{l,g h} \lambda_{l g h,k} (\lambda_{l, g h k})^{-1}\,.
    \end{split}
\end{equation}
Comparing both sides, we find that $\lambda$ all cancel out again, and  we have
\begin{equation}
    \alpha(l, g, h)+ \alpha(l,gh,k)+ \alpha(g, h,k)-\alpha(l g, h, k)-\alpha(l,g, hk) 
    =\frac12 \nu(l, g)  \nu(h,k)  
\end{equation} modulo $\bZ$,
which means 
\begin{equation}
   \delta \alpha = \frac12\nu^2
\end{equation}
as promised.

\subsection{The equivalence relation between fermionic anomalies}
\label{subsec:equivalencefermion}
We drop the superscript $L$, as all operators are from $\cA^L$. The defining equation of $\tilde \alpha$ is 
\begin{equation}  
   \tilde\rho_g (\tilde\lambda_{h,k}) \tilde\lambda_{g,hk} = e^{2\pi i\tilde\alpha(g,h,k)} \tilde\lambda_{g,h} \tilde\lambda_{gh,k}\,,
\end{equation}
and we insert the expression \eqref{eq:fermionicnewconj} and \eqref{eq:utilde} for $\tilde\rho$ and $\tilde\lambda$. The LHS of the above equation is 
\begin{equation}
\begin{split}
     \text{LHS} &= (-1)^{\xi(g)\tilde\nu(h,k)} \Sigma_g \rho_g(\Sigma_h) \rho_g\left(\rho_h(\Sigma_k) \right) \rho_g(\lambda_{h,k})  \rho_g\left( (\Sigma_{hk})^{-1} \right) (\Sigma_{g})^{-1} \\
     &\times \Sigma_{g}  \rho_g(\Sigma_{hk}) \lambda_{g,hk}  (\Sigma_{ghk})^{-1} \,,\\
    \end{split}
\end{equation}
while the RHS reads 
\begin{equation}
     \text{RHS} = e^{2\pi i\tilde\alpha(g,h,k)} \Sigma_{g}  \rho_g(\Sigma_{h}) \lambda_{g,h}  (\Sigma_{gh})^{-1} \Sigma_{gh}  \rho_{gh}(\Sigma_{k}) \lambda_{gh,k}  (\Sigma_{ghk})^{-1} \,.
\end{equation}
Equaling the two expressions yields
\begin{equation}
\begin{split}
   & (-1)^{\xi(g)\tilde\nu(h,k)}  \rho_g\left(\rho_h(\Sigma_k) \right) \rho_g(\lambda_{h,k})   \lambda_{g,hk} = e^{2\pi i\tilde\alpha(g,h,k)}    \lambda_{g,h}  \rho_{gh}(\Sigma_{k}) \lambda_{gh,k} \\
  \Longrightarrow \quad \quad &  (-1)^{\xi(g)\tilde\nu(h,k) + \xi(k) \nu(g,h)}  \lambda_{g,h}\rho_{gh}(\Sigma_k) (\lambda_{g,h})^{-1} e^{2\pi i\alpha(g,h,k)}  \lambda_{g,h} \lambda_{gh,k} \\
 & \qquad\qquad\qquad\qquad\qquad\qquad\qquad\quad \; \; = e^{2\pi i\tilde\alpha(g,h,k)}    \lambda_{g,h}  \rho_{gh}(\Sigma_{k}) \lambda_{gh,k}\\
 \Longrightarrow \quad \quad & \tilde \alpha(g,h,k) = \alpha(g,h,k) + \frac{1}{2}(\xi(g)\tilde\nu(h,k) + \xi(k) \nu(g,h))  \, \pmod \bZ \\
 &\qquad\quad\;\;\; = \alpha(g,h,k) + \frac{1}{2}(\xi(g)(\nu(h,k) + \delta\xi(h,k) )+ \xi(k) \nu(g,h))  \, \pmod \bZ \,,
\end{split}
\end{equation}
where we used \eqref{eq:fermionfusion} and \eqref{eq:keyequality}.

\subsection{Addition formula}
\label{subsec:supertensor}
Consider   two fermionic theories with Hilbert spaces $\cH$ and $\cH'$.
Let $U,V$ act on $\cH$ and $U'$, $V'$ act on $\cH'$.
Denote by $|O|$ be the fermionic parity of operators $O$.
\if0
, and take the tensor product $\cH \otimes \cH'$, then there is a canonical graded tensor product isomorphism 
\begin{equation}
\begin{split}
    \cH \otimes \cH' &\longrightarrow \cH' \otimes \cH \\
    v \otimes v' &\longmapsto (-1)^{|v| |v'|} v' \otimes v \,,
\end{split}
\end{equation}
where $|v|, |v'|$ are fermion parities of the state vectors, and similarly for other cases $\cH \otimes \cH^*$ and $\cH' \otimes \cH'^*$. We can view unitary operators acting on $\cH$ represented as element in $\cH \otimes \cH^*$
\begin{equation}
    U = \sum_{i,j} U_{i j} \,e_i \otimes e^*_j \in \cH \otimes \cH^* \,,
\end{equation}
here we have a formal sum over $i$ and $j$ with respect to the basis $\{e_i\}$ and dual basis $\{e^*_j\}$. Keeping this in mind, the operator algebra on $\cH \otimes \cH'$ satisfies following multiplication rule 
\fi
Let $U\hotimes U'$ denote the combined operation of $U$ and $U'$ on $\cH\otimes \cH'$.
We have an important relation
\begin{equation} \label{eq:fermiontensormulti}
    (U \hotimes U') \cdot (V \hotimes V') = (-1)^{|U'||V|} (U V) \hotimes ( U'  V')\,.
\end{equation}
This additional sign is necessary to make fermionic operators on $\cH$ and $\cH'$ to anticommute,
rather than to commute.
In terms of the ordinary tensor product, this can be achieved by setting \begin{equation}
U\hotimes U' :=  (U (-1)^{ F|U'|}) \otimes U',
\end{equation} where $(-1)^F$  is the fermion parity operator on $\cH$.

We consider $\hat\rho_g(O\hotimes O')  = \rho_g(O) \hotimes \rho'_g(O')$.  
Similarly, we have the fusion operator 
\begin{equation}
    \hat u_{g,h} = u_{g,h} \hotimes {u'}_{g,h} \,,
\end{equation}
from which we immediately see that 
\begin{equation}
    \hat \nu(g,h) := |\hat u_{g,h}| = |u_{g,h}|+ |{u'}_{g,h}| = \nu(g,h) + \nu'(g,h) \, \pmod 2 \,.
\end{equation}
We use Eq.~\eqref{eq:keyequality} to define a new anomaly 3-cochain $\hat \alpha$
\begin{equation}
       \hat\rho_g (\hat\lambda_{h,k}) \cdot \hat\lambda_{g,hk} = e^{2\pi i\hat\alpha(g,h,k)} \hat\lambda_{g,h} \cdot\hat\lambda_{gh,k}\,.
\end{equation}
Plugging in the definitions, we obtain
\begin{equation*}
     \left(  \rho_g (u_{h,k}) \hotimes  \rho'_g({u'}_{h,k}) \right) \cdot \left( u_{g,h k} \hotimes {u'}_{g,h k} \right) = e^{2\pi i\hat\alpha(g,h,k)}  \left(u_{g,h} \hotimes {u'}_{g,h}  \right)\cdot \left( u_{gh,k} \hotimes {u'}_{gh,k}\right) \,.
\end{equation*}
Evaluating both sides using the multiplication rule~\eqref{eq:fermiontensormulti} and the defining properties of $(\nu,\alpha)$ and $(\nu',\alpha')$, we discover the addition formula
\begin{equation}
\begin{split}
        \hat \alpha(g, h, k) &= \alpha(g, h, k) + \alpha'(g, h, k) + \frac{1}{2} (\nu(g,h k) \nu'(h,k) + \nu(gh,k) \nu'(g,h)) \pmod \bZ\, \\
        &= \left(\alpha+\alpha'+\frac12 \nu \cup_1 \nu' \right)(g,h,k) \pmod \bZ \,.
\end{split}
\end{equation}

 \bibliographystyle{ytamsalpha}
 \def\arxivfont{\rm}
 \baselineskip=.95\baselineskip
\bibliography{ref}

\providecommand{\bysame}{\leavevmode\hbox to3em{\hrulefill}\thinspace}
\providecommand{\MR}{\relax\ifhmode\unskip\space\fi MR }
\providecommand{\MRhref}[2]{%
  \href{http://www.ams.org/mathscinet-getitem?mr=#1}{#2}
}
\providecommand{\href}[2]{#2}
\providecommand{\doihref}[2]{\href{#1}{#2}}
\providecommand{\arxivfont}{\tt}
\begin{thebibliography}{KPMTD19}

\bibitem[BM05]{Belov:2005ze}
D.~Belov and G.~W. Moore, \emph{{Classification of Abelian Spin Chern-Simons Theories}}, \href{http://arxiv.org/abs/hep-th/0505235}{{\arxivfont arXiv:hep-th/0505235}}.

\bibitem[BM16]{Brumfiel:2016vpy}
G.~Brumfiel and J.~Morgan, \emph{{The Pontrjagin Dual of 3-Dimensional Spin Bordism}}, \href{http://arxiv.org/abs/1612.02860}{{\arxivfont arXiv:1612.02860 [math.AT]}}.

\bibitem[BS93]{Bohm:1993yg}
G.~B{\"o}hm and K.~Szlach{\'a}nyi, \emph{{Z(2N) parafermions from U(1) loop group}}. \url{https://lib-extopc.kek.jp/preprints/PDF/1993/9307/9307412.pdf}.

\bibitem[CFLS19]{Cordova:2019jnf}
C.~C{\'o}rdova, D.~S. Freed, H.~T. Lam, and N.~Seiberg, \emph{{Anomalies in the Space of Coupling Constants and Their Dynamical Applications I}}, \doihref{http://dx.doi.org/10.21468/SciPostPhys.8.1.001}{SciPost Phys. \textbf{8} (2020) 001}, \href{http://arxiv.org/abs/1905.09315}{{\arxivfont arXiv:1905.09315 [hep-th]}}.

\bibitem[CS85]{CheegerSimons}
J.~Cheeger and J.~Simons, \emph{Differential characters and geometric invariants}, \href{https://doi.org/10.1007/BFb0075216}{Lecture Notes in Math. \textbf{1167} (1985) 50--80}.

\bibitem[DGG21]{Delmastro:2021xox}
D.~Delmastro, D.~Gaiotto, and J.~Gomis, \emph{{Global Anomalies on the Hilbert Space}}, \doihref{http://dx.doi.org/10.1007/JHEP11(2021)142}{JHEP \textbf{11} (2021) 142}, \href{http://arxiv.org/abs/2101.02218}{{\arxivfont arXiv:2101.02218 [hep-th]}}.

\bibitem[Dou98]{Douglas:1998xa}
M.~R. Douglas, \emph{{D-Branes and Discrete Torsion}}, \href{http://arxiv.org/abs/hep-th/9807235}{{\arxivfont arXiv:hep-th/9807235}}.

\bibitem[DW90]{Dijkgraaf:1989pz}
R.~Dijkgraaf and E.~Witten, \emph{{Topological Gauge Theories and Group Cohomology}}, \doihref{http://dx.doi.org/10.1007/BF02096988}{Commun. Math. Phys. \textbf{129} (1990) 393}.

\bibitem[EN14]{Else:2014vma}
D.~V. Else and C.~Nayak, \emph{{Classifying symmetry-protected topological phases through the anomalous action of the symmetry on the edge}}, \doihref{http://dx.doi.org/10.1103/PhysRevB.90.235137}{Phys. Rev. B \textbf{90} (2014) 235137}, \href{http://arxiv.org/abs/1409.5436}{{\arxivfont arXiv:1409.5436 [cond-mat.str-el]}}.

\bibitem[FMS06a]{Freed:2006ya}
D.~S. Freed, G.~W. Moore, and G.~Segal, \emph{{The Uncertainty of Fluxes}}, \doihref{http://dx.doi.org/10.1007/s00220-006-0181-3}{Commun. Math. Phys. \textbf{271} (2007) 247--274}, \href{http://arxiv.org/abs/hep-th/0605198}{{\arxivfont arXiv:hep-th/0605198}}.

\bibitem[FMS06b]{Freed:2006yc}
\bysame, \emph{{Heisenberg Groups and Noncommutative Fluxes}}, \doihref{http://dx.doi.org/10.1016/j.aop.2006.07.014}{Annals Phys. \textbf{322} (2007) 236--285}, \href{http://arxiv.org/abs/hep-th/0605200}{{\arxivfont arXiv:hep-th/0605200}}.

\bibitem[GEH24]{GarciaEtxebarria:2024fuk}
I.~Garc\'\i{}a~Etxebarria and S.~S. Hosseini, \emph{{Some aspects of symmetry descent}}, \doihref{http://dx.doi.org/10.1007/JHEP12(2024)223}{JHEP \textbf{12} (2025) 223}, \href{http://arxiv.org/abs/2404.16028}{{\arxivfont arXiv:2404.16028 [hep-th]}}.

\bibitem[GW12]{Gu:2012ib}
Z.-C. Gu and X.-G. Wen, \emph{{Symmetry-protected topological orders for interacting fermions: Fermionic topological nonlinear \ensuremath{\sigma} models and a special group supercohomology theory}}, \doihref{http://dx.doi.org/10.1103/PhysRevB.90.115141}{Phys. Rev. B \textbf{90} (2014) 115141}, \href{http://arxiv.org/abs/1201.2648}{{\arxivfont arXiv:1201.2648 [cond-mat.str-el]}}.

\bibitem[HTY20]{Hsieh:2020jpj}
C.-T. Hsieh, Y.~Tachikawa, and K.~Yonekura, \emph{{Anomaly Inflow and $p$-Form Gauge Theories}}, \doihref{http://dx.doi.org/10.1007/s00220-022-04333-w}{Commun. Math. Phys. \textbf{391} (2022) 495--608}, \href{http://arxiv.org/abs/2003.11550}{{\arxivfont arXiv:2003.11550 [hep-th]}}.

\bibitem[HV24]{Hofman:2024oze}
D.~M. Hofman and S.~Vitouladitis, \emph{{Generalised Symmetries and State-Operator Correspondence for Nonlocal Operators}}, \doihref{http://dx.doi.org/10.1007/JHEP02(2025)061}{JHEP \textbf{02} (2025) 061}, \href{http://arxiv.org/abs/2406.02662}{{\arxivfont arXiv:2406.02662 [hep-th]}}.

\bibitem[Isa02]{IsaksenElementary}
D.~C. Isaksen, \emph{A cohomological viewpoint on elementary school arithmetic}, \doihref{http://dx.doi.org/10.1080/00029890.2002.11919915}{The American Mathematical Monthly \textbf{109} (2002) 796--805}.

\bibitem[JY24]{Jia:2024kbl}
Q.~Jia and P.~Yi, \emph{{Discrete Gauge Anomalies and Instantons}}, \doihref{http://dx.doi.org/10.1007/JHEP08(2024)018}{JHEP \textbf{08} (2024) 018}, \href{http://arxiv.org/abs/2405.09007}{{\arxivfont arXiv:2405.09007 [hep-th]}}.

\bibitem[KL21]{Kawagoe:2021gqi}
K.~Kawagoe and M.~Levin, \emph{{Anomalies in bosonic symmetry-protected topological edge theories: Connection to F symbols and a method of calculation}}, \doihref{http://dx.doi.org/10.1103/PhysRevB.104.115156}{Phys. Rev. B \textbf{104} (2021) 115156}, \href{http://arxiv.org/abs/2105.02909}{{\arxivfont arXiv:2105.02909 [cond-mat.str-el]}}.

\bibitem[KPMTD19]{Kaidi:2019tyf}
J.~Kaidi, J.~Parra-Mart{\'\i ne}z, Y.~Tachikawa, and A.~Debray, \emph{{Topological Superconductors on Superstring Worldsheets}}, \doihref{http://dx.doi.org/10.21468/SciPostPhys.9.1.010}{SciPost Phys. \textbf{9} (2020) 10}, \href{http://arxiv.org/abs/1911.11780}{{\arxivfont arXiv:1911.11780 [hep-th]}}.

\bibitem[M{\"u}g04]{Muger:2005ra}
M.~M{\"u}ger, \emph{{Conformal Orbifold Theories and Braided Crossed G-Categories}}, \doihref{http://dx.doi.org/10.1007/s00220-005-1291-z}{Commun. Math. Phys. \textbf{260} (2005) 727--762}, \href{http://arxiv.org/abs/math/0403322}{{\arxivfont arXiv:math/0403322}}. [Erratum: Commun.Math.Phys. 260, 763 (2005)].

\bibitem[PS86]{MR900587}
A.~Pressley and G.~Segal, \emph{Loop groups}, Oxford Mathematical Monographs, Oxford University Press, 1986.

\bibitem[Seg81]{Segal:1981ap}
G.~Segal, \emph{{Unitarity Representations of Some Infinite Dimensional Groups}}, \doihref{http://dx.doi.org/10.1007/BF01208274}{Commun. Math. Phys. \textbf{80} (1981) 301--342}.

\bibitem[Seg85]{MR797420}
G.~B. Segal, \emph{Loop groups}, \href{https://doi.org/10.1007/BFb0084589}{Lecture Notes in Math. \textbf{1111} (1985) 155--168}.

\bibitem[Sei23]{Seifnashri:2023dpa}
S.~Seifnashri, \emph{{Lieb-Schultz-Mattis Anomalies as Obstructions to Gauging (Non-On-Site) Symmetries}}, \doihref{http://dx.doi.org/10.21468/SciPostPhys.16.4.098}{SciPost Phys. \textbf{16} (2024) 098}, \href{http://arxiv.org/abs/2308.05151}{{\arxivfont arXiv:2308.05151 [cond-mat.str-el]}}.

\bibitem[Vaf86]{Vafa:1986wx}
C.~Vafa, \emph{{Modular Invariance and Discrete Torsion on Orbifolds}}, \doihref{http://dx.doi.org/10.1016/0550-3213(86)90379-2}{Nucl. Phys. B \textbf{273} (1986) 592--606}.

\bibitem[Vit25]{Vitouladitis:2025zoy}
S.~Vitouladitis, \emph{{Higher-Form Anomalies and State-Operator Correspondence Beyond Conformal Invariance}}, \href{http://arxiv.org/abs/2507.01104}{{\arxivfont arXiv:2507.01104 [hep-th]}}.

\bibitem[Wit82]{Witten:1982fp}
E.~Witten, \emph{{An $SU(2)$ Anomaly}}, \doihref{http://dx.doi.org/10.1016/0370-2693(82)90728-6}{Phys. Lett. B \textbf{117} (1982) 324--328}.

\bibitem[Wit83a]{Witten:1983tw}
\bysame, \emph{{Global Aspects of Current Algebra}}, \doihref{http://dx.doi.org/10.1016/0550-3213(83)90063-9}{Nucl. Phys. B \textbf{223} (1983) 422--432}.

\bibitem[Wit83b]{Witten:1983tx}
\bysame, \emph{{Current Algebra, Baryons, and Quark Confinement}}, \doihref{http://dx.doi.org/10.1016/0550-3213(83)90064-0}{Nucl. Phys. B \textbf{223} (1983) 433--444}.

\end{thebibliography}

\end{document}